\newtheorem{theorem}{Theorem}
\newtheorem{definition}{Definition}
\title{Hybrid Controlled User Association and Resource Management for Energy-Efficient Green RANs with Limited Fronthaul}
\author{Li-Hsiang Shen, Chia-Lin Tsai, Chia-Yu Wang, and Kai-Ten Feng \\
Department of Electrical and Computer Engineering \\
National Yang Ming Chiao Tung University, Hsinchu, Taiwan\\
gp3xu4vu6.cm04g@nctu.edu.tw, asfo.cm02g@nctu.edu.tw, wang91801115.cm03g@nctu.edu.tw,\\
and ktfeng@mail.nctu.edu.tw}
\begin{document}
\maketitle

\begin{abstract}
To alleviate green house effect, high network energy efficiency (EE) has increasingly become an important research target in wireless green communications. Therefore, the investigation for resource management to mitigate the co-tier interference in the small cell network (SCN) is provided. Moreover, with the merits of cloud radio access network (C-RAN), small cell base stations (SBSs) can be decomposed of a central small cell (CSC) and remote small cells (RSCs). To achieve the coordination, the split medium access control (MAC) based functional splitting is adopted with scheduler deployed at CSCs and retransmission functions left at RSCs. However, limited fronthaul has a compelling impact at RSCs due to requirements of user quality-of-service (QoS). Accordingly, a traffic control-based user association and resource allocation (TURA) scheme is proposed for a centralized resource management. To deal with the infeasibility to control all RSCs by CSC, we propose a hybrid controlled user and resource management (HARM) scheme. A CSC performs TURA for RSCs to mitigate intra-group interference within localized C-RANs, whereas the CSCs among separate C-RANs conduct cooperative resource competition (CRC) game for alleviating inter-group interference. Based on regret-based learning algorithm, the proposed schemes are analytically proved to reach the correlated equilibrium (CE). Simulation results have validated the effect of traffic control in TURA scheme and the convergence of CRC. Moreover, the comparison of the proposed TURA, HARM, and CRC schemes with the benchmark is revealed. It is observed that the TURA scheme outperforms the other schemes under ideal fronthaul control, whilst the proposed HARM scheme can sustain EE performance considering feasible implementation.   
\end{abstract}

\begin{IEEEkeywords}
 Cloud radio access networks, functional split, limited fronthaul, energy efficiency, green communication, use association, resource management, power allocation, game theory.
\end{IEEEkeywords}

\section{Introduction} 
\label{ch:Intro}
As the wireless traffic grows drastically \cite{1}, how to fulfill increasing user demands becomes a dominating issue for the next generation wireless communication systems. The deployment of small cell base stations (SBSs) has been promoted to resolve the above-mentioned concern over the past years owning to its advantages of low transmit power and low cost \cite{2}-\cite{3}. On the other hand, it is estimated that the energy consumption for information and communication technology (ICT) is rising at the rate of $15$-$20$ percentages per year. It is appraised that the ICT industry is responsible for $3\%$ of worldwide annual electrical energy consumption, which gives rise to $2$-$4\%$ of world's carbon dioxide emissions and severe impacts on the global environments \cite{4}\nocite{5}-\cite{6}. Therefore, GreenTouch \cite{7} suggests that not only improving the entire network's achievable capacity but also cutting back the power consumption of base stations is significant to network energy efficiency (EE) so as to reduce overall carbon footprint. Recently, the architecture of hyper-dense SBS deployment is viewed as a key solution to satisfy the huge amount of traffic demand while reducing the energy consumption. Since small cell networks (SCNs) shorten the distance between transmitters and receivers, the required energy for data transmission against pathloss can be reduced \cite{8}. Triggered by the merits of low transmit power and the tendency of dense deployment for high data rate requirements, the SCNs can be regarded as an energy-efficient wireless configuration for communication systems. 

Furthermore, the emerging virtualization and cloud technologies shift the network functions from radio resource control (RRC) and physical (PHY) layers at the edge of SBSs to a central processing unit, which is called the cloud radio access network (C-RAN). The functional split capability \cite{9} in C-RAN has grabbed the attention from both academia and industry \cite{10}-\cite{11}. The coordination among small cells in C-RAN can provide economic advantages and performance gain, including improved coordination, scalability enhancement, cost reduction, and more flexibility in network deployments. The technologies of virtual network functions (VNFs) run the aggregated small cell functions from different base stations in the virtual machines. Seven types of functional splits has been investigated \cite{12} between the central small cells (CSCs) and remote small cells (RSCs) with the quality requirements of fronthaul, which connects CSCs to RSCs. The applicable network functions and the restrictions of hardware in different scenarios of functional splits are also discussed. In \cite{13}, the upper layer network functions of small cells are virtualized in CSC in order to perform centralized management and coordination to serve its corresponding RCSs. On the other hand, the remaining functions reside in the edged small cells, namely RSCs, and autonomously perform lower layer functionalities. 
\begin{figure}
\centering
\includegraphics[width=1\columnwidth]{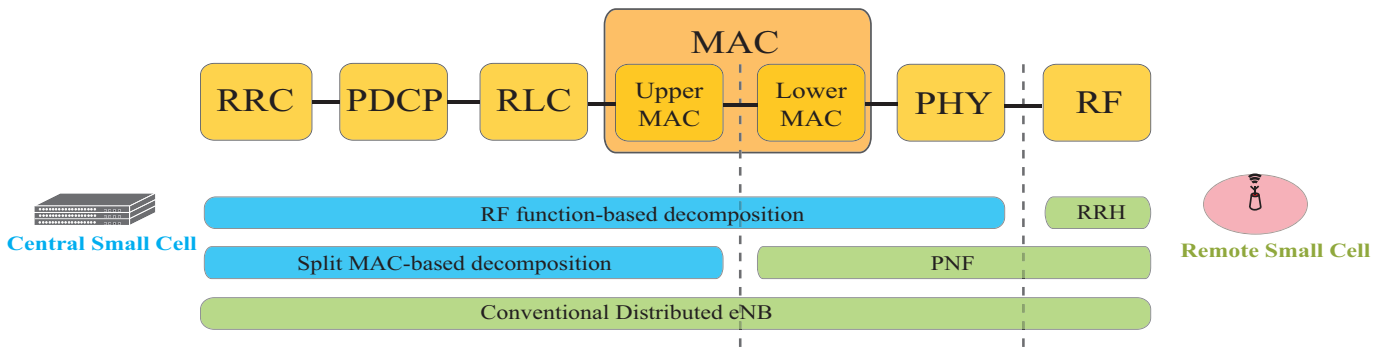}
\caption{Funtional split decompositions for C-RAN.} \label{Fig:functional}
\end{figure}

Fig. \ref{Fig:functional} shows the three types of functional splits extracted from \cite{13}-\cite{14} that will be discussed in this paper. First of all, conventional distributed small cells as illustrated in the bottom case of Fig. \ref{Fig:functional} conduct network functions autonomously, and the existing S1 solutions can support backhaul requirements with nationwide low cost IP networks. However, the limited capacity of non-ideal backhaul possesses a constraint to the SBSs and leads to performance bottleneck of achievable system throughput \cite{12}-\cite{13}. Under practical consideration of backhaul with limited capacity, there is a fundamental impact on user association while the backhaul links are overloaded. To guarantee the quality-of-service (QoS) of each user, the serving SBS with overloaded backhaul will offload some users to other SBS which is referred as traffic control \cite{15}. In \cite{16}, the authors aim at maximizing the weighted-sum rate for backhaul-constrained SCNs with carrier aggregation. Furthermore, it is crucial to appropriately allocate available frequency resources and SBS's transmit power to fulfill user's QoS requirement. Non-cooperative games in \cite{17}\nocite{18}-\cite{19} are adopted to enhance network capacity or EE by each SBS which selfishly determines the resource blocks (RBs) and power assignments based on its own utility functions, whereas the overall system performance will be degraded due to lacking of coordination. Another proposed framework in \cite{20} executes resource and power allocations by constructing a cooperative game between small cells for cross-tier and co-tier interference mitigation. Moreover, the scheme proposed in \cite{21} achieves optimal network EE based on a cooperative game for subcarrier assignment. Research in \cite{22} assigns subcarriers and allocates SBS's transmit power by evolutionary game theory, which analyzes the average interference between SBSs.

Another functional split virtualizes all the network functions in the CSC; while the RSCs only contain the radio frequency (RF) processing unit for data transmission and reception (as shown in the top-most case of Fig. \ref{Fig:functional}). This is the classic realization for C-RAN architecture and the VNFs are assigned to the most appropriate processor or hardware accelerator in CSCs to efficiently execute corresponding network functions of base stations. Although higher system performance can be achieved owing to full coordination among small cells, the requirements of low latency and ideal fronthaul will result in considerable expenditure to the network operators. Previous work in \cite{23} conducts RB and power allocations for SCN by a centrally controlled unit, and an optimization problem is formulated to solve for maximizing network EE. Nevertheless, the energy conservation and QoS requirements have not been taken into account in both literatures \cite{16} and \cite{24}. The data rate requirements and restrictions of backhaul capacity are simultaneously considered in the literatures \cite{14}, \cite{25} and \cite{26}. In \cite{14}, an energy-efficient resource allocation algorithm in multi-cell orthogonal frequency division multiple access (OFDMA) systems is proposed. The work in \cite{26} presents a joint resource allocation and admission control to minimize the sum of interference levels that macrocell can suffer from the small cells. Even though a near optimal solution can be obtained, the signaling overhead and computational loadings are potential drawbacks to the RF-based functional split management.

The above observations intuitively imply that there exists tradeoff between the overall system performance and the deployment cost of fronthaul links. According to the analysis in \cite{2}, medium access control (MAC) can be divided into upper MAC as the scheduler and lower MAC as the hybrid automatic repeat request (HARQ) mechanism. Additionally, functional split of MAC can deliver the benefits of centralization but only requires a small increase in transporting data in such network scenario. This is well-aligned with existing multi-vendor ecosystem for telecom operators based on the functional application platform interface. However, it is apparently infeasible for a CSC to control a huge amount of RSCs in realistic communication systems due to the hardware limitations. As a result, a practical scale for the implementation of dense SCNs is analyzed in this paper, where a CSC will be in charge of the radio resource management (RRM) for the RSCs in the localized small cell group such as a shopping mall or commercial building under the split MAC-based network functions. The interference management with limited fronthaul capacity will be invoked within a localized small cell group. Furthermore, the scheme for interference mitigation among the localized small cell groups will also be proposed in this paper. With the preponderance of split MAC-based functional splitting, this paper proposes a framework for subchannel and RSC transmit power allocation to maximize EE centrally by a CSC in its localized serving small cell group. The traffic control and small cell on/off mechanisms are also designed under the consideration of limited capacity for non-ideal fronthual. The traffic control occurs with overloaded fronthaul of serving RSC and this overloaded RSC will offload the users to one of the nearby RSCs, which remain enough fronthaul capacity to serve the user. Meanwhile, RSC will also tend to offload the associated users to others when its loading is low so as to turn off RSC for energy-saving. Hence, not only the required QoS under the circumstance of limited fronthaul capacity, but also the power conservation of RSCs can be achieved by the mechanism of user association. 

With the above statements, a joint optimization problem for traffic control-based user association and resource allocation (TURA) is proposed to be solved in this paper. To the best of existing knowledge, the merits of this joint optimization not only can reduce the computation loadings of CSC, but also reduce the power consumption to enhance the network performance. Unlike the ideal implementation for VNFs, which are co-located in a CSC, the interference alleviation for the group edge users among the localized small cell groups induces a challenge for the CSCs to coordinate their RRM since there is limited and asynchronous information exchanged between each other. Consequently, a cooperative resource competition (CRC) game for subchannel and transmit power allocation between CSCs is formulated in order to improve the network EE by effectively managing inter-group interferences. By regarding each CSC as a player in the game, a CSC can adopt its transmission alignments based on the observation of transmit power between each other and the received utility function. In the proposed CRC scheme, a distributed learning algorithm is employed by evaluating the regret value of different actions taken by the player. In summary, this paper proposes a hybrid controlled user association and resource management (HARM) scheme, which consists of both distributed and centralized RRM schemes for the C-RAN split MAC based functional splitting SCNs with the consideration of restricted fronthaul capacity.

The rest of this paper is organized as follows. The detailed descriptions about the system model and problem formulation of proposed HARM algorithm are provided in Section \textrm{II}. Section \textrm{III} illustrates the proposed TURA scheme for SCN in a localized C-RANs. Section \textrm{IV} formulates the CRC scheme among multiple localized C-RANs and adopted distributed algorithm to reach correlated equilibrium (CE). The performance of proposed framework is evaluated in Section \textrm{V}. Finally, Section \textrm{VI} draws the conclusions.

$\emph{Notations:}$ Denote bold capital letter as matrix and $|\cdot|$ as the absolute value. The operator $\text{max}(\cdot)$ returns the largest value in an array. $\mathbbm{1}(A)$ is an indicator function which is equal to 1 when the event $A$ happens and 0 otherwise. A random variable with value between $a$ and $b$ is generated by ${\rm rand}(a,b)$.
\section{System Model and Problem Formulation}
\label{sec:system model}
The detail descriptions for the architecture and operating process of proposed split MAC-based SCN are provided in the first and second subsections. Also, the energy-efficient optimization problem of resource allocation with traffic control and small cell on/off mechanisms under relevant constraints of subchannels, RSC transmit power, and capacity of fronthaul is formulated in the third subsection.
\subsection{System Model}
\label{System Model}
\begin{figure}
\centering
\includegraphics[width=1\columnwidth]{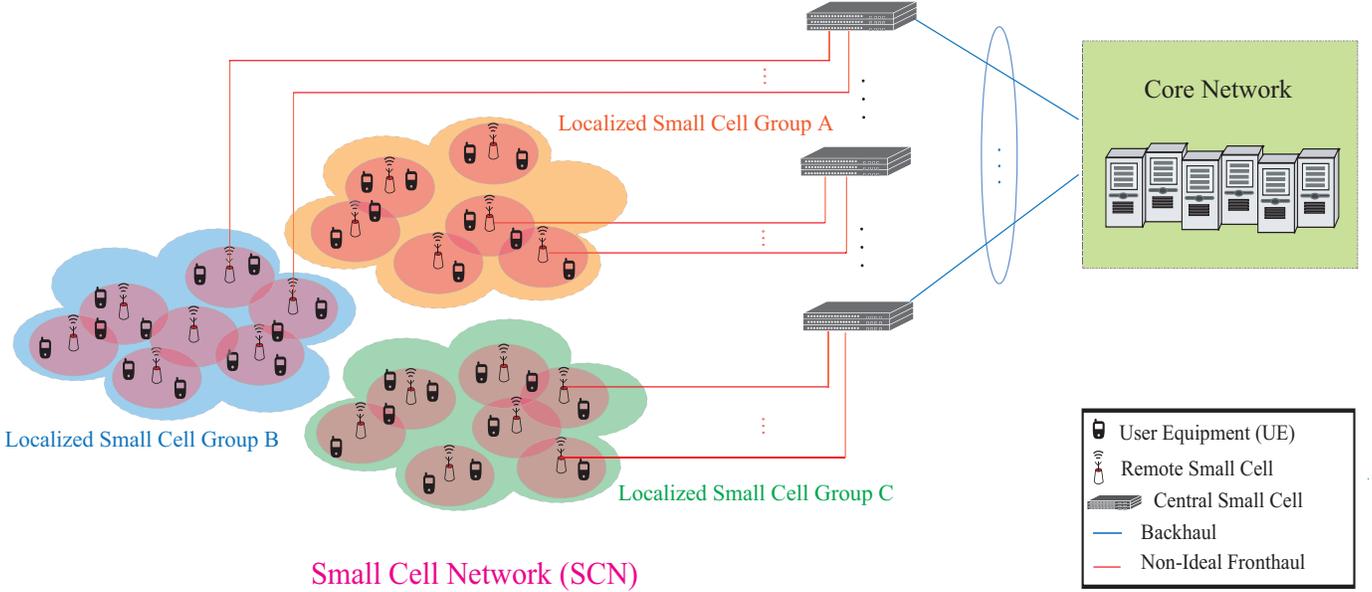}
\caption{C-RAN network achitecture for porposed split MAC-based SCN.} \label{Fig:SC}
\end{figure}
Fig. \ref{Fig:SC} illustrates a downlink orthogonal frequency-division multiplexing access (OFDMA) cellular network, which is divided into $\mathcal{C}=\{1,...,C\}$ localized small cell groups. There exits a set of $\mathcal{S}=\{1,...,S\}$ open access RSCs and a set of $\mathcal{K}=\{1,...,K\}$ serving UEs in the SCN. Each RSC connects to its serving CSC by non-ideal fronthaul links, which possesses the restriction of limited capacity in the rest of this paper within a localized small cell group. The CSCs in different small cell groups communicate with core network by backhaul links. In the proposed split MAC-based SCN, each CSC will conduct centralized resource allocation for its serving RSCs in a localized small cell group. In other words, a localized small cell group can be regarded as a localized C-RAN. It is considered that the channel state information at the transmitter (CSIT) can be measured by UEs and feedback to CSC through the RSCs.  Furthermore, there are $N$ available subchannels in the system and the bandwidth of each subchannel is $B$. All the RSCs share the entire frequency band, which leads to inter-cell interference between the RSCs and a subchannel can only be assigned to a single UE. As a result, the signal-to-interference-plus-noise ration (SINR) $\gamma_{c,s,k}^{n}$ of UE $k$ served by RSC $s$ in localized C-RAN $c$ on the subchannel $n$ is given as
\begin{align}
\label{SINR}
{\gamma}_{c,s,k}^{n}=\frac{p_{c,s,k}^{n}|g_{c,s,k}|^{2}}{I_{c,s,k}^{n}+N_{0}W},
\end{align}
where $g_{c,s,k}^{n}$ and $p_{c,s,k}^{n}$ are the deterministic channel gain and transmit power from RSC $s$ in the localized C-RAN $c$ to UE $k$ on subchannel $n$. $N_{0}$ is the power spectral density of additive white Gaussian noise (AWGN) and $W$ is the bandwidth of a subchannel. The term $I_{c,s,k}^{n}$ in the denominator of (\ref{SINR}) is represented in (\ref{I_{c,s,k}}), which include both the intra-group interference in a localized C-RAN (the first two terms) and the co-channel inter-group interference caused from other localized C-RANs to the localized C-RAN $c$ (the third term):
\begin{align}
\label{I_{c,s,k}}
I_{c,s,k}^{n}= {\sum\limits_{\substack{i=1\\i\neq{k}}}^{K}\phi_{c,s,j}\psi_{c,s,j}^{n}p_{c,s,j}^{n}g_{c,s,k}^{n}}+
\sum\limits_{\substack{i=1\\i\neq{s}}}^{S}{\sum\limits_{\substack{j=1\\j\neq{k}}}^{K}\phi_{c,i,j}\psi_{c,i,j}^{n}p_{c,i,j}^{n}g_{c,i,k}^{n}}
+\sum\limits_{\substack{t=1\\t\neq{c}}}^{C}\sum\limits_{\substack{i=1\\i\neq{s}}}^{S}{\sum\limits_{\substack{j=1\\j\neq{k}}}^{K}\phi_{t,i,j}\psi_{t,i,j}^{n}p_{t,i,j}^{n}g_{t,i,k}^{n}}
\end{align}
where $\phi_{t,i,j} \in \{0,1\}$ indicates whether user $j$ is associated with RSC $i$ in localized C-RAN $t$ and, on top of that, $\psi_{t,i,j}^{n} \in \{0,1\}$ decides the assignment of subchannel $n$ to user $j$ served by RSC $i$ in localized C-RAN $t$. Given the SINR calculated in (\ref{SINR}),  the achievable data rate $R_{c,s,k}^{n}$ for user $k$ served by RSC $s$ in the localized C-RAN $c$ on the subchannel $n$ can be formulated based on Shannon capacity as
\begin{align}
\label{Datarate_RB}
{R_{c,s,k}^{n}}=
{W}\log_{2}\left(1+{\gamma}_{c,s,k}^{n}\right).
\end{align}
Therefore, the sum-rate $R_{c}$ within a localized C-RAN $c$ is acquired as
\begin{align}
\label{Datarate_group}
{R_{c}}=
\sum\limits_{s=1}^{S}
\sum\limits_{k=1}^{K}
\phi_{c,s,k}
\sum\limits_{n=1}^{N}
\psi_{c,s,k}^{n}
R_{c,s,k}^{n}.
\end{align}
Furthermore, the transmit power of RSC $s$ in localized C-RAN $c$ is represented as
\begin{align}
\label{Txpower_cell}
P_{c,s}^{\rm (Tx)}
=\sum\limits_{k=1}^{K}\phi_{c,s,k}\sum\limits_{n=1}^{N}\psi_{c,s,k}^{n}p_{c,s,k}^{n}.
\end{align}
The users are initially connected to RSCs with the largest reference signal receiving power (RSRP) and may be offloaded to other cells based on the conditions of traffic load and available fronthaul capacity. A signal power overhead $P^{\rm (O)}$ is considered to reduce the ping-pong effect, which indicates handovers back-and-forth between two RSCs contributing to system over-loadings. Therefore, the power overhead in RSC $s$ resulted from traffic control is formulated as
\begin{align}
\label{Txpower_singaling}
P_{c,s}^{\rm (TC)}=P^{\rm (O)}\sum_{k=1}^{K}\phi_{c,s,k}\mathbbm{1}(|\phi_{c,s,k}-\bar{\phi}_{{c,s},k}|> 0),
\end{align}
where $\bar{\phi}_{{c,s},k}$ represents the initial state of user association. Based on (\ref{Txpower_cell}) and (\ref{Txpower_singaling}), the total power consumption $P_{c}$ within a localized C-RAN can be modeled as 
\begin{align}
P_{c}=&
\sum_{s=1}^{S}\mathbbm{1}\left(\sum_{k=1}^{K}\phi_{c,s,k}>0\right)
\left( P_{c,s}^{\rm (Tx)}+P_{c,s}^{\rm (TC)}+P^{\rm (CA)}\right) 
+\sum_{s=1}^{S}
\left(
1-\mathbbm{1}\left(\sum_{k=1}^{K}\phi_{c,s,k}>0\right)
\right) 
P^{\rm (CS)},
\label{Power-Consumption} 
\end{align}
where $P^{\rm (CA)}$ and $P^{\rm (CS)}$ are the circuit power consumptions of RSC in active mode and sleep mode respectively. Since switching RSC with low traffic load into sleep mode is an efficient approach to reduce the power consumptions of SBSs for green communications \cite{27}, $P^{\rm (CA)}$ and $P^{\rm (CS)}$ are consequently taken into considerations in the power model. The indicator function $\mathbbm{1}\left(\sum_{k=1}^{K}\phi_{s,k}>0\right)$ can be viewed as an RSC on/off strategy which is equal to $0$ when there is no user associated with RSC and RSC will fall into sleep mode for power-saving. The indicator function becomes $1$ when there is a single or multiple users associated with RSC and RSC is in the active mode. Accordingly, the EE of a localized C-RAN, which is defined as the ratio of total achievable data rate to the total power consumption of SBSs, can be expressed from (\ref{Datarate_group}) and (\ref{Power-Consumption}) as
\begin{align}
\label{EE_group}
{\eta_{c}}=
\frac{R_{c}}{P_{c}}.
\end{align}
\subsection{Operational Process for Proposed HARM Scheme}

\begin{figure}
\centering
\includegraphics[width=3.5in]{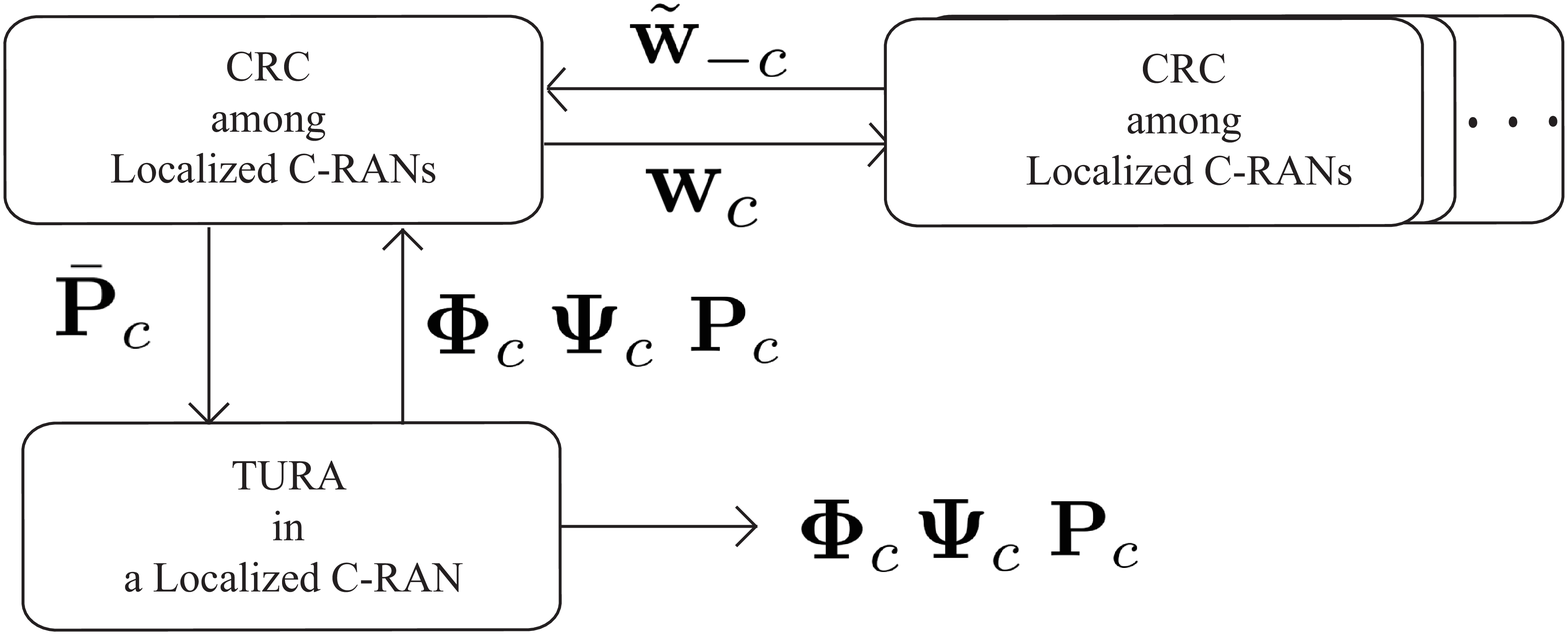}
\caption{Operating flow chart for interference mangngement of prposed HARM scheme.} \label{Fig:OP}
\end{figure}

The operating flow chart of proposed HARM system for mitigating both inter- and intra-group interferences is illustrated in Fig. \ref{Fig:OP}. Since the interference suffered from the UEs can be decomposed into both the intra-group interference within the same localized C-RAN and the inter-group interference from different localized C-RANs, the HARM scheme is proposed to overcome the above-mentioned problems by implementing the CRC and TURA algorithm repeatedly. By implementing the proposed TURA algorithm, the users will be associated with their serving RSC and allocated with proper configuration of subchannel and RSC transmit power considering the limited capacity of fronthaul and QoS requirement. The intra-group interference can be mitigated by the central control of CSC within each localized C-RAN. The decision strategies in TURA scheme including user association, subchannel allocation, and transmit power allocation within a localized C-RAN are respectively defined as $\mathbf{\Phi}_{c}=\{ \phi_{c,s,k}| 1\leq s \leq S, 1\leq k\leq K \}$, $\mathbf{\Psi}_{c}=\{\psi_{c,s,k}^{n}|1\leq s\leq S,1\leq k\leq K, 1\leq n\leq N\}$ and $\mathbf{P}_{c}=\{p_{c,s,k}^{n}|1\leq s\leq S,1\leq k\leq K, 1\leq n \leq N \}$. 

Furthermore, a CRC game is performed between CSCs based on the EE of their own localized C-RAN to alleviate the inter-group interference. The decision strategies $\mathbf{\Phi}_{c}$ , $\mathbf{\Psi}_{c}$, and $\mathbf{P}_{c}$ obtained in TURA scheme will be utilized by the localized CSC to determine the probability of strategy for resource allocation $\mathbf{w}_{c}$, and will be delivered from a localized C-RAN to other CRANs. It is considered that erroneous information $\tilde{\mathbf{w}}_{-c}$ will be received by a CSC from other localized C-RANs owning to asynchronous communications between CSCs. Noted that the detail descriptions of $\mathbf{w}_{c}$ and $\tilde{\mathbf{w}}_{-c}$ will be explained in Subsection IV.C.(3). Moreover, with the adoption of proposed CRC scheme, the set of upper bounds for transmit power on each subchannel in a localized C-RAN $c$ can be obtained as $\bar{\mathbf{P}}_{c}=\{\bar{P}_{c}^{n}|1\leq n \leq N\}$, where the total transmit power $\bar{P}_{c}^{n}$ on subchannel $n$ is $\bar{P}_{c}^{n}=\sum\limits_{s=1}^{S}\sum\limits_{k=1}^{K} \sum\limits_{\ell=1}^{L_{s}}\kappa_{c,s,k}^{n,\ell}$ and $\kappa_{c,s,k}^{n,\ell}$ is the constrained inter-group interference determined by proposed CRC scheme (will be shown in Subsection IV.A) and $L_{s}$ is the total power level. In other words, under the circumstance of bounded inter-group interference, each CSC will execute TURA scheme to determine user association, RSC transmit power allocation, and subchannel assignment for mitigation of intra-group interference. We can then effectively provide sum-rate enhancement and reduce power consumption under limited fronthaul capacity within a localized C-RAN. As a result, the proposed HARM scheme is performed by repeatedly executing TURA and CRC algorithms until it converges. 

\subsection{Problem Formulation}
The objective of this work is to maximize EE through subchannel assignment, power allocation, user association, and small cell on/off mechanisms under the constraints of data rate requirement of each UE, maximum transmit power allowance of each RSC, and limited fronthaul capacity. The optimization problem of resource allocation can be formulated to acquire the transmission policies for $\mathbf{\Phi}_{c}$, $\mathbf{\Psi}_{c}$, and $\mathbf{P}_{c}$ in order to improve network EE, which is stated as follows.

\begin{subequations}\label{Optimization}
\begin{align}
&\underset{\mathbf{\Phi},\mathbf{\Psi},\mathbf{P}}{\text{max}}
\eta(\mathbf{\Phi},\mathbf{\Psi},\mathbf{P})=\sum\limits_{c=1}^{C}\eta_{c}(\mathbf{\Phi}_{c},\mathbf{\Psi}_{c},\mathbf{P}_{c})
 \label{Objective Function} \\ 
    &\text{s.t} \quad P_{c,s}^{(\mathrm{Tx})}\leq P_{c,s}^{(\mathrm{max})}, &&\forall c,\forall s, \label{Constraint-1} \\
      &\quad\quad\sum\limits_{s=1}^{S}\phi_{c,s,k}\sum\limits_{n=1}^{N}\psi_{c,s,k}^{n}R_{c,s,k}^{n}\geq {R}_{c,k}^{(\mathrm{min})},&&\forall c,\forall s,\forall k, \label{Constraint-3} \\
  &\quad\quad\sum_{k=1}^{K}\phi_{c,s,k}\sum\limits_{n=1}^{N}\psi_{c,s,k}^{n}R_{c,s,k}^{n}\leq B_{c,s}^{(\mathrm{max})},&&\forall c,\forall s, \label{Constraint-4} \\
  &\quad\quad\sum\limits_{s=1}^{S}\phi_{c,s,k} \leq 1, &&\forall c,\forall k,\label{Constraint-6}\\ 
  &\quad\quad\phi_{c,s,k}, \psi_{c,s,k}^{n}\in\{0,1\},&&\forall c,\forall s,\forall k, \forall n,\label{Constraint-7} \\
   &\quad\quad p_{c,s,k}^{n}\geq 0,&&\forall c,\forall s,\forall k,\forall n.\label{Constraint-8}
\end{align}
\end{subequations}
The parameter $\mathbf{\Phi}$ in (\ref{Objective Function}) is the set of user association configuration and $\mathbf{\Phi}=\{\mathbf{\Phi}_{c}\in|1\leq c\leq C\}
$. The sets for decision policies for subchannel assignment and RSC power allocation are defined by $\mathbf{\Psi}=\{\mathbf{\Psi}_{c}\in|1\leq c\leq C\}
$ and $\mathbf{P}=\{\mathbf{P}_{c}\in|1\leq c\leq C\}$. $P_{c,s}^{(\text{max})}$ in (\ref{Constraint-1}) is the maximum transmit power of each RSC which restricts the sum of allocated power on all subchannels. (\ref{Constraint-3}) specifies that each user achieves its target data rate $R_{c,k}^{\text{(min)}}$ according to the QoS requirement. (\ref{Constraint-4}) depicts that the sum-rate of each RSC should be less than the allowance of fronthaul capacity $B_{c,s}^{(\text{max})}$. Furthermore, (\ref{Constraint-6}) describes that each user can be served only by a single RSC. The constraint in (\ref{Constraint-7}) indicates that both $\phi_{c,s,k}$ and $\psi_{c,s,k}^{n}$ are binary integer variables for user association and subchannel assignment, receptively. (\ref{Constraint-8}) defines the power allocation parameters to be non-negative values.
\section{Proposed Traffic Control-Based User Association and Resource Allocation (TURA) Scheme within a localized C-RAN}
\label{sec:TURA}
In this section, our proposed TURA scheme will be presented which centrally performs resource allocation by CSC for its corresponding RSCs in the localized C-RAN. The TURA scheme can mitigate the intra-group interference and provide higher network capacity by adopting efficient traffic offloading, user association, and resource allocation. Moreover, proper configuration of user association not only can satisfy the QoS requirement under limited fronthaul capacity but also can turn off the lightly loaded RSC to conserve the transmit power, which gives rise to achieve higher EE. Based on (\ref{Optimization}), the optimization problem for proposed TURA algorithm can be formulated for a localized C-RAN $c$ as follows.
\begin{subequations}\label{TCOptimization}
\begin{align}
    &\underset{\mathbf{\Phi}_{c},\mathbf{\Psi}_{c},\mathbf{P}_{c}}{\text{max}}
\tilde{\eta}_{c}(\mathbf{\Phi}_{c},\mathbf{\Psi}_{c},\mathbf{P}_{c})
\label{Objective Function for TC-UARA} \\ 
    &\text{s.t (\ref{Constraint-1})--(\ref{Constraint-8})}\\
    &\quad\sum\limits_{s=1}^{S}\sum\limits_{k=1}^{K}\phi_{c,s,k}\psi_{c,s,k}^{n}p_{c,s,k}^{n}\leq \bar{P}_{c}^{n}, &&\forall c,\forall n.
\end{align}
\end{subequations}
The objective function $\tilde{\eta}_{c}$ in (\ref{Objective Function for TC-UARA}) is EE for localized C-RAN $c$ based on estimated SINR, which is calculated as
\begin{align}
\label{eta_estimated}
\tilde{\eta}_{c}=\frac{\tilde{R}_{c}(\mathbf{\Phi}_{c},\mathbf{\Psi}_{c},\mathbf{P}_{c})}{P_{c}(\mathbf{\Phi}_{c},\mathbf{\Psi}_{c},\mathbf{P}_{c})},
\end{align}
where the estimated sum-rate $\tilde{R}_{c}$ for localized C-RAN $c$ is obtained as
\begin{align}
\label{sumrate_estimated}
\tilde{R}_{c}=\sum\limits_{s=1}^{S}\sum\limits_{k=1}^{K}\phi_{c,s,k}\sum\limits_{n=1}^{N}\psi_{c,s,k}^{n}W\log_{2}
(1+\frac{p_{c,s,k}^{n}g_{c,s,k}^{n}}{\tilde{I}_{c,s,k}+N_{0}W}).
\end{align}
The term $\tilde{I}_{c,s,k}$ in (\ref{sumrate_estimated}) is the estimated interference, which consists of intra-group interference and erroneous inter-group interference due to asynchronous information exchanged between localized C-RANs. $\tilde{I}_{c,s,k}$ can be acquired as
\begin{align}
\label{Inteference_estimated}
\tilde{I}_{c,s,k}^{n}= {\sum\limits_{\substack{i=1\\i\neq{k}}}^{K}\phi_{c,s,i}\psi_{c,s,i}^{n}p_{c,s,i}^{n}g_{c,s,k}^{n}}+
\sum\limits_{\substack{j=1\\j\neq{s}}}^{S}{\sum\limits_{\substack{i=1\\i\neq{k}}}^{K}\phi_{c,j,i}\psi_{c,j,i}^{n}p_{c,j,i}^{n}g_{c,j,k}^{n}}
+\sum\limits_{\substack{t=1\\t\neq{c}}}^{C}\sum\limits_{\substack{j=1\\j\neq{s}}}^{S}{\sum\limits_{\substack{i=1\\i\neq{k}}}^{K}\phi_{t,j,i}\psi_{t,j,i}^{n}\tilde{p}_{t,j,i}^{n}g_{t,j,k}^{n}}.
\end{align}
%
Since the proposed optimization problem of resource allocation in (\ref{TCOptimization}) contains nonlinear objective function, which is a ratio of three transmit decision policies $\mathbf{\Phi_{c}}$, $\mathbf{\Psi_{c}}$ and $\mathbf{P}_{c}$, it is difficult to solve this problem via conventional linear programming methods. On top of that, the optimization problem is non-convex with respect to $\mathbf{\Phi_{c}}$, $\mathbf{\Psi_{c}}$ and $\mathbf{P}_{c}$ since the discrete variables and co-channel interference are also taken into consideration. Generally speaking, the problem can be solved by exhaustive searching method, which tries all the allocated configuration for user association, subchannels, and, transmit power of RSC. However, the computational complexity grows exponentially with the number of RSCs. For conventional optimization methods to deal with this non-convex optimization problem, it is required to relax the discrete variables to continuous variables by adopting some programming techniques and transforming the original problem into a convex one. Nevertheless, the transformed optimization problem cannot be directly solved to obtain the near-optimal solutions \cite{28}-\cite{29}. 
  
To overcome the difficulty of dealing with the optimization problem in (\ref{TCOptimization}), the TURA algorithm with stochastic processes is proposed and described in this subsection. Particle swarm optimization (PSO), which motivated by the social behaviour of bird flocks or fish schooling has gained increasing popularity during the last decade due to its effectiveness in conducting difficult optimization tasks, especially in resource allocation of wireless systems. The potential solutions of resource allocation problem are called particles in PSO, and the particles will spread through the problem space to achieve the final resource configuration by considering historical data and the current best particle. Compared with the well-known genetic algorithm (GA), the main merits of PSO over GA relies on the momentum effect of velocity vectors for particle movement which can quickly move the current best solution of each candidate particle to the global best solution in order to result in faster algorithm convergence.
However, the solutions from PSO can easily be trapped in local optimums, which can be effectively alleviated by adopting the quantum-behaved particle swarm optimization (QPSO) \cite{30}-\cite{31}. A random number generator is utilized in QPSO with a certain probability distribution to simulate the particle trajectories in order to provide global convergence of particles. 
Furthermore, unlike PSO, the QPSO does not require velocity vectors for particles and also possesses fewer parameters to adjust, which makes it easier to be implemented in realistic wireless systems. The position of a particle in QPSO, i.e., the candidate solution $\{\mathbf{\Phi_{c}},\mathbf{\Psi_{c}},\mathbf{P_{c}}\}$ of (\ref{TCOptimization}), can be iteratively updated based on the particle fitness and evolution process for approaching the optimal solution.
  
  
\subsection{Fitness Function and Transformation for Unconstrained Form}  
In the proposed TURA scheme, each potential solution become a candidate by means of evaluating the quality of fitness function. In our considered optimization problem, the objective function in (\ref{TCOptimization}) is a key factor for the fitness function to decide how to allocate limited resource in wireless network. However, the fitness function is generally in an unconstrained form and a transformation from constrained objective function is required. To tackle this difficulty, the penalty function is adopted to transform the original optimization problem in (\ref{TCOptimization}) into an unconstrained one \cite{30}, where the fitness function is defined as the gain between reward and penalty functions. Note that the reward function is the objective function for achieving higher EE and the penalty function is the degree of the transmission policies that does not meet the constraints. Therefore, the fitness function can be formulated as
\begin{align}
{
{F}(\mathbf{\Phi_{c}},\mathbf{\Psi_{c}},\mathbf{P_{c}}) = {\tilde{\eta}_{c}}(\mathbf{\Phi_{c}},\mathbf{\Psi_{c}},\mathbf{P_{c}})-\alpha {\Delta}(\mathbf{\Phi_{c}},\mathbf{\Psi_{c}},\mathbf{P_{c}}),
\label{fitness} 
}
\end{align}	
where $\alpha$ is the penalty factor, which is a parameter for the particle to balance the fitness function between the EE performance and the penalty that does not satisfy constraints. 
Moreover, the ${\Delta}(\mathbf{\Phi_{c}},\mathbf{\Psi_{c}},\mathbf{P_{c}})$ in (\ref{fitness}) term represents the penalty function, which can be obtained as 
\begin{align}
{\Delta}&(\mathbf{\Phi_{c}},\mathbf{\Psi_{c}},\mathbf{P_{c}})=
\sum\limits_{s=1}^{S}\left[\max\left(0,P_{c,s}^{\rm (Tx)}- P_{c,s}^{\rm (max)}\right)\right]^2 
+\sum\limits_{s=1}^{S}\left[\max\left(0,\sum_{k=1}^{K}\phi_{c,s,k}\sum\limits_{n=1}^{N}\psi_{c,s,k}^{n}{R}_{c,s,k}^{n}- C_{s}^{\rm (max)}\right)\right]^2 \nonumber \\ &\quad\quad\quad\quad\quad\:\:+\sum\limits_{k=1}^{K}\left[\max\left(0,R_{c,k}^{\rm (min)}-\sum\limits_{s=1}^{S}\phi_{c,s,k}\sum\limits_{n=1}^{N}\psi_{c,s,k}^{n}{R}_{c,s,k}^{n}\right)\right]^2 
+\sum\limits_{k=1}^{K}\left[\max\left(0,\sum\limits_{s=1}^{S}\phi_{c,s,k} - 1\right)\right]^2 \nonumber\\
&\quad\quad\quad\quad\quad\:\:+\sum\limits_{s=1}^{S}\sum\limits_{n=1}^{N}\left[\max\left(0,\sum\limits_{k=1}^{K}\psi_{c,s,k}^{n} - 1\right)\right]^2 
+\sum\limits_{s=1}^{S}\sum\limits_{k=1}^{K}\sum\limits_{n=1}^{N}\left[\max\left(0,-p_{c,s,k}^{n}\right)\right]^2 \nonumber\\
&\quad\quad\quad\quad\quad\:\:+\sum\limits_{s=1}^{S}\sum\limits_{k=1}^{K}\left[\left(\phi_{c,s,k}\right)^{2}-\phi_{c,s,k}\right]^2 
+\sum\limits_{s=1}^{S}\sum\limits_{k=1}^{K}\sum\limits_{n=1}^{N}\left[\left(\psi_{c,s,k}^{n}\right)^{2}-\psi_{c,s,k}^{n}\right]^2.
\label{penalty} 
\end{align}
It can be observed from (\ref{penalty}) that the value of penalty function relies on the gaps between the transmission policies and the corresponding constraints. The more constraints unsatisfied, the higher penalty the particle needs to suffer from conducting this decision of transmission policies, which impacts the direction of path to globally best position or is screened out when comparing with its historical particles during the evolution process.
\subsection{Operation Process for TURA scheme}
Since the updating process for each variable is implementing by the same manner, let $\mathbf{X_{c}}$ generalizes the decision policies $\{\mathbf{\Phi_{c}},\mathbf{\Psi_{c}},\mathbf{P_{c}}\}$
for simplification. In each iteration $t$, there are $I$ candidate solutions of optimization problem in (\ref{TCOptimization}) to be chosen from the searching space, where the searching space is referred as all the possible solution sets. The detailed descriptions of TURA scheme are given as follows.  

\subsubsection{Initialization}
All the $I$ candidate solutions are initialized at the beginning. For notational simplicity, the $i$-th candidate solution at iteration $t$ is denoted as $\mathbf{X}_{i}(t)$. The fitness function in (\ref{fitness}) of each candidate solution will be calculated, and the threshold of fitness function $F^{\text{th}}$ for the algorithm is determined to achieve the convergent solution. 

\subsubsection{Evolution}
After the initialization of $I$ particles, we can obtain the best solution of candidate $i$-th particle in iteration $t$, $\mathbf{X}^{\rm (HB)}_{i}(t)$, which contributes to the largest value of fitness function in (\ref{fitness}) in its history. Moreover, the global best solution among all $I$ particles at iteration $t$, which is denoted as $\mathbf{X}^{\rm (GB)}(t)$ can be also acquired. The weighted mean of $I$ elite candidate solutions at iteration $t$ can be calculated as
\begin{align}
\mathbf{X}^{\rm (Mean)}(t) = \frac{1}{I}\sum_{i=1}^{I}{\mathbf{X}}^{\rm (HB)}_i(t).
\label{mean_position}
\end{align}
As a result, the $i$-th candidate solution in iteration $t$ is updated by the evolution equation \cite{32}-\cite{33} as follows.
\begin{equation}
\mathbf{X}_{i}(t+1) =
\left\{
\begin{split}
	&\mathbf{X}^{\rm (SP)}_{i}(t) + \beta(t) \left\vert\mathbf{X}^{\rm (Mean)}(t)-\mathbf{X}_{i}(t)\right\vert\cdot\ln\left(\frac{1}{\mu}_{i}(t)\right), \qquad \varepsilon_{i}(t)>0.5,\\
	&{\mathbf{X}}^{\rm (SP)}_{i}(t) - \beta(t) \left\vert\mathbf{X}^{\rm (Mean)}(t)-\mathbf{X}_{i}(t)\right\vert\cdot\ln\left(\frac{1}{\mu}_{i}(t)\right), \qquad \varepsilon_{i}(t)\leq0.5,\\
\end{split}
\right.
\label{evolution}
\end{equation}
where $\mu_{i}(t)={\rm rand}(0,1)$ and $\varepsilon_{i}(t)={\rm rand}(0,1)$. ${\mathbf{X}}^{\rm (SP)}_{i}(t)$ in (\ref{evolution}) is the attractor between local and global optimal solutions for candidate $i$ in iteration $t$, which is given by
\begin{align}
{\mathbf{X}}^{\rm (SP)}_{i}(t) &= \lambda_{i}\cdot (t)\mathbf{X}^{\rm (HB)}_{i}(t)+\left(1-\lambda_{i}(t)\right) \cdot \mathbf{X}^{\rm (GB)}(t), 
\label{SP}
\end{align}
where $\lambda_{i}(t)={\rm rand}(0,1)$.
On the basis of the evolution equation in (\ref{evolution}), the candidate solutions can be attained based on ${\mathbf{X}}^{\rm (SP)}_{i}(t)$, ${\mathbf{X}}^{\rm (Mean)}_{i}(t)$, and $\mathbf{X}_{i}(t)$. The new starting point $\mathbf{X}_{i}^{\rm (SP)}(t)$ for the candidate solution at next iteration $t+1$ depends on the best solution in the history of $i$-th candidate and the global best solution among all the $I$ candidates at iteration $t$. The parameter $\lambda_{i}(t)$ in (\ref{SP}) is a random variable for the candidate at next iteration to start from a new position, which cross-correlates the historically best solution with the global best solution instead of starting from either ${\mathbf{X}}^{\rm (HB)}_{i}(t)$ or ${\mathbf{X}}^{\rm (GB)}(t)$ unilaterally. Moreover, the second term $\beta(t) \vert\mathbf{X}^{\rm (Mean)}(t)-\mathbf{X}_{i}(t)\vert\cdot\ln(\frac{1}{\mu}_{i}(t))$ in the second term of (\ref{evolution}) affect the speed that the position of next $i$-th candidate solution converging to the final solution. The term of $\beta(t)$ is a coefficient that influences the convergence speed of the algorithm \cite{1} which is given by 
\begin{align}
\beta(t)=\left( \beta^{\rm (max)}-\beta^{\rm (min)}\right)  \cdot \frac{T-t}{T}+\beta^{\rm (min)},
\label{beta}
\end{align}
where $\beta^{\rm (max)}$ and $\beta^{\rm (min)}$ are respectively the maximum and minimum searching ranges in the solution space. The maximum number of iterations is denoted as $T$. It can be observed from (\ref{beta}) that $\beta(t)$ is linearly decreasing with $t$ since $\mathbf{X}_{i}(t)$ is approaching convergence for large $t$. The difference between $\mathbf{X}^{\rm (Mean)}(t)$ and $\mathbf{X}_{i}(t)$ also affects the speed of convergence. If the difference is large, which means this candidate solution is far away from current average position, this candidate solution at the next iteration should accelerate to converge, and vice versa. The term $\mu_{i}(t)$ is a random variable to alleviate the next candidate falling into local optimum. In addition to $\mu_{i}(t)$, $\varepsilon_{i}(t)$ is also set to reduce the probability that the next candidate solution trapped into local extremes, which can be regarded as the direction for the starting point of next candidate to search the potential solutions. 
\subsubsection{Convergence}
Ultimately, the proposed TURA algorithm completes once achieving the terminated condition. There are two terminated conditions where the first condition is the maximum number of iterations, i.e., $t=T$, whilst the second one is to reach the stop criterion. A gap ratio $\tau_{i}(t)$ is defined as
\begin{align}
\label{Stop_tau}
\tau_{i}(t) = \frac{\left\vert F(\mathbf{X}^{\rm (HB)}_{i}(t))-F(\mathbf{X}^{\rm (GB)}(t)) \right\vert}{F(\mathbf{X}^{\rm (GB)}(t))},
\end{align}
and the convergence is achieved if all the gap ratio values $\tau_{i}(t)$ are smaller than a convergence threshold $F^{\rm (th)}$ as
\begin{align}
\label{Stop}
\sum\limits_{i=1}^{I}\mathbbm{1}\left(\tau_{i}(t) \leq F^{\rm (th)}\right) = I.
\end{align}
The gap ratio $\tau_{i}(t)$ in $\eqref{Stop_tau}$ represents the distance between the $i$-th elite candidate solution and the global best solution in iteration $t$. $\eqref{Stop}$ denotes that the fitness values of all elite candidate solutions are closed enough to the global best solution, i.e., the fitness function in $\eqref{fitness}$ converges.

The overall procedure for proposed TURA scheme within a localized C-RAN is demonstrated in Algorithm \ref{alg:QPSO}. Moreover, the computational complexity of proposed TURA algorithm is $\mathcal{O}(I \times T \times S \times K \times N)$, which is linear to the numbers of candidate solutions, iterations, SCs, users, and subchannels \cite{31}. The performance gain provided by TURA will be illustrated via simulation in Section \ref{sec:PE}.

\begin{algorithm}[!tb]
\caption{Proposed TURA Algorithm}
\SetAlgoLined
\DontPrintSemicolon
\label{alg:QPSO}
\begin{algorithmic}[1]
\STATE Initialization: \\
	1) Set the number of candidate solutions $I$\\
	2) Set the maximum number of iterations $T$\\
	3) Set the iteration counter to $t=1$\\
	4) Set the threshold of convergence $F^{\rm (th)}$\\
	5) Initialize the candidate solution $\mathbf{X}_{i}(1)$ for all $i$\\ 
	6) Initialize $\mathbf{X}^{\rm (HB)}_{i}(1)=\mathbf{X}_{i}(1)$\\
	7) Initialize $\mathbf{X}^{\rm (GB)}(1)$ by selecting the best solution from $\mathbf{X}^{\rm (HB)}_{i}(1)$\\ 
	8) Initialize CONVERGENCE = \textbf{FALSE}\\
\REPEAT
\STATE Calculate the mean position of candidate solution $\mathbf{X}^{\rm (Mean)}(t)$ by $\eqref{mean_position}$ \\
\STATE Calculate the convergence speed $\beta(t)$ by $\eqref{beta}$\\
\FOR{$i=1,2,\dots,I$}
\STATE Calculate the local attractor $\mathbf{X}^{\rm (A)}_i(t)$ by $\eqref{SP}$\\
\STATE Update the candidate solution $\mathbf{X}_i(t+1)$ by $\eqref{mean_position}$--$\eqref{SP}$\\
\STATE Calculate the value of fitness function $F(\mathbf{X}_i(t+1))$ by $\eqref{fitness}$\\
\IF {${F}(\mathbf{X}_{i}(t+1))>{F}(\mathbf{X}_{i}(t))$} 
\STATE $\mathbf{X}^{\rm (HB)}_i(t+1)=\mathbf{X}_{i}(t+1)$ 
\ELSE 
\STATE $\mathbf{X}^{\rm (HB)}_i(t+1)=\mathbf{X}^{\rm (HB)}_i(t)$ 
\ENDIF\\
\STATE Calculate the value of fitness functions $F(\mathbf{X}^{\rm (HB)}_i(t+1))$ and $F(\mathbf{X}^{\rm (GB)}(t))$ by $\eqref{fitness}$\\
\IF {$F(\mathbf{X}^{\rm (HB)}_{i}(t+1))>F(\mathbf{X}^{\rm (GB)}(t))$} 
\STATE $\mathbf{X}^{\rm (GB)}(t+1)=\mathbf{X}^{\rm (HB)}_{i}(t+1)$ 
\ELSE 
\STATE $\mathbf{X}^{\rm (GB)}(t+1)=\mathbf{X}^{\rm (GB)}(t)$ 
\ENDIF
\STATE Calculate the value $\tau_{i}(t+1)$ by $\eqref{Stop_tau}$
\ENDFOR
\IF {$\sum\limits_{i=1}^{I}\mathbbm{1}\left(\tau_{i}(t+1) \leq F^{\rm (th)}\right) = I$}
\STATE CONVERGENCE = \textbf{TRUE}
\ENDIF
\STATE Increment of $t\leftarrow t+1$ 
\UNTIL{$t=T$ or CONVERGENCE = \textbf{TRUE}}
\end{algorithmic}
\end{algorithm}
\section{Proposed Cooperative Resource Competition (CRC) Game among Localized C-RANs}
\label{sec:CRC}
Consider the hardware limitation, the proposed TURA scheme cannot afford to control a large number of RSCs in a dense small cell network. Consequently, the entire network can be viewed as a gathering network of many localized C-RANs, where a CSC will be in charge of the RRM for RSCs within a group. The CRC scheme of subchannel assignment and transmit power allocation for RSCs among localized C-RANs is proposed and described in this section. After eliminating the intra-group interference by resource allocation within each localized C-RAN based on TURA, each CSC conducts CRC scheme to further alleviate the inter-group interference so as to achieve higher system performance. Since there is limited and asynchronous information exchanged between the localized C-RANs, it is difficult for all the CSCs to coordinate their RRM under centrally controlled operation. Hence, A distributed management is adopted to overcome above-mentioned difficulty based on game theory, where the resource competition between localized small cells can be formulated as a cooperative game. Each CSC only requires limited information about the probabilities of chosen actions from other CSCs. We design a learning algorithm based on EE of each localized C-RAN for the CSCs to determine their transmit actions, which consist of subchannels assignment and transmit power allocation of their serving RSCs.
\subsection{Cooperative Game-Based Resource Competition Game}
In this subsection, the formulation of proposed CRC scheme among localized small cell groups is analyzed and investigated. Based on cooperative game theory, the proposed CRC game in normal form can be denoted as follows.
\begin{align}
\label{Game theoretical model}
\mathcal{G}=\left(\mathcal{C},\{\mathcal{A}_{c}\}_{c\in \mathcal{C}},\{U_{c}\}_{c \in \mathcal{C} }\right),
\end{align}
where the CSC set $\mathcal{C}=\{1, ..., C\}$ denotes the set of players in a game, who compete for the subchannel set $\mathcal{N}=\{1,...,N\}$ and resource for transmit power of RSCs, i.e., the CSCs are the chiefs for action decision. Also, $\mathcal{A}_{c}$ is the action space of CSC $c$ for power allocation vectors. Let $\kappa_{c,s,k}^{n}=\phi_{c,s,k}^{n}\times p_{c,s,k}^{n}$ be the product indicator of subchannel assignment and allocated transmit power, which determines whether or how much power UE $k$ will be allocated on subchannel $n$ from RSC $s$ in localized C-RAN $c$. A finite and discrete action space is a significant requirement for cooperative games. As a result, to form a finite and discrete action space $\mathcal{A}_{c}$ for each player, let $L_{c}\in\mathbb{N}$ be the number of discrete power and $\kappa_{c,s,k}^{n,\ell}$ represents $\ell$-th transmit power level from RSC $s$ to UE $k$ over subchannel $n$, where $n\in \mathcal{N}, \ell \in \mathcal{L}_{c}$ and $\mathcal{L}_{c}=\{1, ..., L_{C}\}$. Notice that if $\phi_{c,s,k}^{n}=0$, zero transmit power will be allocated and $\kappa_{c,s,k}^{n,\ell}=0$. Thus, we define $\kappa_{c,s,k}^{0,0}$ as no subchannel is assigned to the user and the action space of RSC $s$ which is then expressed as
\begin{align}
\label{Action space of a CSC}
\mathcal{A}_{c}=\kappa_{c,s,k}^{0,0}\cup\left\lbrace \kappa_{c,s,k}^{n,\ell}:n\in\mathcal{N},\ell\in\mathcal{L}_{C}\right\rbrace.
\end{align}
Denote $\mathcal{A}$ by $\mathcal{A}=\mathcal{A}_{1}\times\cdots\times\mathcal{A}_{C}$ as the entire action space of all the players. Moreover, the last term $U_{c}$ of (\ref{Game theoretical model}) is the utility function of CSC $c$. In game theory, players will choose their actions based on their own utility functions so as to obtain the maximum reward. Since the major purpose of this paper is to maximum EE of each localized C-RAN and the performance of overall system can be further enhanced with the cooperation between CSCs, where each CSC will determine its strategy based on higher EE gain. Accordingly, $\eta_{c}$ can be regarded as the utility function for each CSC to decide its transmission strategy for resource allocation and $U_{c}$ can be described as
\begin{align}
\label{U_{c}}
U_{c}=\tilde{\eta}_{c}(\mathbf{a}_{c}),
\end{align}
where $\mathbf{a}_{c}=\{\kappa_{c,s,k}^{n,\ell}|1\leq s \leq S,1\leq k \leq K,1\leq c \leq C, 1\leq n\leq N,0\leq \ell \leq L_{c}, s\in\mathbb{Z}^{+}, k\in\mathbb{Z}^{+}, c\in\mathbb{Z}^{+}, n\in\mathbb{Z}^{+}, \ell\in\mathbb{Z}^{+}\}$ represents the vector of actions taken by CSC $c$.
\subsection{Existence of Correlated Equilibrium in Proposed Scheme} 
In the proposed CRC game $\mathcal{G}$, each CSC aims to maximize its utility function cooperatively and further improve the EE performance of entire system by choosing an optimal transmission action $\mathbf{a}_{c}$, which is the solution of (\ref{Optimization}). Most of the existing works investigate the potential solutions of players' strategies by the stable point, which can lead to no play obtaining higher utility gain by changing their determined actions on the Nash equilibrium (NE) \cite{34}. However, a higher utility gain can be acquired by the players cooperatively deciding their strategies via information sharing, which is called cooperative game. In the cooperative games, the stable point that no player will unilaterally deviate from the selected action to other ones can be held by the correlated equilibrium (CE) in the game theory. The concept of CE is defined as follows.
\begin{definition}[Correlated Equilibrium]
Denote $\Delta\mathcal{A}$ as the set of all probability distributions over the finite action space $\mathcal{A}$. The probability of correlated strategy $A$ is given by $P(A)$, where $(P(A))_{A\in\mathcal{A}}\in\Delta\mathcal{A}$. On the condition that $\forall \mathbf{a}_{c}\in\mathcal{A}_{c}$ and $\forall c \in \mathcal{C}$, the CSC $c$ will choose strategy $\mathbf{a}_{c}$ rather than any other strategy $\tilde{\mathbf{a}}_{c}$ to achieve the stable point of CE if and only if 
\begin{align}
\label{CE}
\sum\limits_{\mathbf{A}_{-c}\in\mathcal{A}_{-c}}P\left(\mathbf{a}_{c},\mathbf{A}_{-c}\right)\cdot \left[U_{c}(\mathbf{a}_{c},\mathbf{A}_{-c})-U_{c}(\tilde{\mathbf{{a}}}_{c},\mathbf{A}_{-c}) \right]\geq 0,
\end{align}
where $\mathbf{A}_{-c}$ represents the matrix of transmission strategies taken by all CSCs except for CSC $c$. On top of that, the action space formed by all the CSCs except for CSC $c$ is expressed as $\mathcal{A}_{-c}=\prod_{i\neq c}\mathcal{A}_{i}$. 
\end{definition}
It can be observed from (\ref{CE}) that players will coordinate their actions with each other by exchanging the probability distribution of strategies cooperatively. On the contrary, another well-known concept for analyzing the chosen strategies is NE, where each player is inclined to selfishly decide its actions. Note that NE is a point inside CE considering the extreme case that different strategies are independent. Accordingly, a better overall welfare of the players can be intuitively reached on CE compared to the strategies on NE. It induces that higher network EE can be achieved by the CE approach with the cooperation between CSCs. The following theorem prove the existence of CE in our considered small cell networks.
\begin{theorem} \label{thm1}
A CE exists in the proposed resource competition game between the small cell networks.
\end{theorem}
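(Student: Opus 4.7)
The plan is to exploit the fact that the proposed CRC game $\mathcal{G}=\left(\mathcal{C},\{\mathcal{A}_c\}_{c\in\mathcal{C}},\{U_c\}_{c\in\mathcal{C}}\right)$ defined in (\ref{Game theoretical model}) is a finite strategic-form game, and to invoke the classical existence theorem due to Aumann: every finite game admits at least one correlated equilibrium. The cleanest route is to go through Nash's theorem, use the fact that the set of Nash equilibria is a (non-empty) subset of the set of correlated equilibria, and then read off the conclusion.

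First I would verify the regularity hypotheses required by Nash's theorem. The player set $\mathcal{C}=\{1,\dots,C\}$ is finite. For each CSC $c\in\mathcal{C}$, the per-RSC atomic action set $\kappa_{c,s,k}^{0,0}\cup\{\kappa_{c,s,k}^{n,\ell}:n\in\mathcal{N},\ell\in\mathcal{L}_c\}$ in (\ref{Action space of a CSC}) has cardinality $1+NL_c<\infty$, so the joint action vector $\mathbf{a}_c$ lies in a finite product of such sets and the global action space $\mathcal{A}=\prod_{c\in\mathcal{C}}\mathcal{A}_c$ is itself finite. The utility $U_c(\mathbf{a})=\tilde{\eta}_c(\mathbf{a})$ in (\ref{U_{c}}) is a well-defined, bounded, real-valued function on $\mathcal{A}$: the numerator (\ref{sumrate_estimated}) is upper bounded by the Shannon capacity of a finite-bandwidth OFDMA channel with bounded per-RSC transmit power (\ref{Constraint-1}), while the denominator (\ref{Power-Consumption}) is strictly positive through the circuit-power contributions $P^{(\mathrm{CA})}$ or $P^{(\mathrm{CS})}$ always present at every RSC. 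Consequently $\mathcal{G}$ satisfies the hypotheses of Nash's theorem.

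Next I would invoke Nash's theorem to obtain a mixed-strategy NE $\{\sigma_c^\ast\}_{c\in\mathcal{C}}$ with $\sigma_c^\ast\in\Delta\mathcal{A}_c$, and build a candidate CE by forming the product distribution $P^\ast(\mathbf{a})=\prod_{c\in\mathcal{C}}\sigma_c^\ast(\mathbf{a}_c)$ on $\mathcal{A}$. Finally I would verify that $P^\ast$ satisfies the correlated-equilibrium inequality (\ref{CE}). The NE condition states that for every $c$ and every $\mathbf{a}_c$ in the support of $\sigma_c^\ast$, deviating unilaterally to any $\tilde{\mathbf{a}}_c\in\mathcal{A}_c$ cannot strictly increase the expected utility taken with respect to $\prod_{i\neq c}\sigma_i^\ast$. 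Because $P^\ast$ is a product distribution, the marginal of $\mathbf{A}_{-c}$ conditioned on $\mathbf{a}_c$ equals $\prod_{i\neq c}\sigma_i^\ast$ and is independent of $\mathbf{a}_c$; multiplying the NE inequality by $\sigma_c^\ast(\mathbf{a}_c)\geq 0$ and summing over $\mathbf{A}_{-c}\in\mathcal{A}_{-c}$ yields precisely (\ref{CE}) for every alternative $\tilde{\mathbf{a}}_c$. Hence $P^\ast$ is a CE of $\mathcal{G}$ and the theorem follows.

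The main obstacle is not the existence argument itself, which reduces to a textbook inclusion $\mathrm{NE}\subseteq\mathrm{CE}$, but rather the bookkeeping required to check that the estimated EE utility $\tilde{\eta}_c$ remains finite and well-defined on the entire discrete action space $\mathcal{A}$. The two places this could fail are a zero denominator if both $\mathbbm{1}(\sum_k\phi_{c,s,k}>0)=0$ and $P^{(\mathrm{CS})}=0$ simultaneously, and an undefined SINR if $\tilde{I}_{c,s,k}^{n}+N_0W=0$; both are ruled out by the modeling assumptions $P^{(\mathrm{CS})}>0$ and $N_0>0$, so the argument goes through without further technical work. An alternative to the Nash route is the direct Hart--Schmeidler construction via LP duality applied to $\mathcal{G}$, which avoids invoking any fixed-point theorem but yields the same conclusion.
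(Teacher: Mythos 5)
Your proof is correct, and at its core it rests on the same single observation the paper uses: $\mathcal{G}$ has finitely many players and a finite, discrete joint action space, so the classical existence result for correlated equilibria in finite games applies. The difference is in how that classical result is obtained. The paper simply cites it as a black box (Theorem 1 of its reference [34]) and stops; you instead derive it from Nash's theorem plus the standard inclusion $\mathrm{NE}\subseteq\mathrm{CE}$, constructing the product distribution $P^\ast(\mathbf{a})=\prod_c\sigma_c^\ast(\mathbf{a}_c)$ and checking inequality (\ref{CE}) directly (the case split between actions in and outside the support of $\sigma_c^\ast$ is handled correctly, since off-support terms contribute zero). You also do something the paper skips entirely: verifying that $U_c=\tilde{\eta}_c$ is a well-defined, bounded real-valued function on all of $\mathcal{A}$, i.e., that the power denominator in (\ref{eta_estimated}) is strictly positive (via $P^{(\mathrm{CA})}$, $P^{(\mathrm{CS})}>0$) and the SINR denominator is bounded away from zero (via $N_0>0$). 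That check is genuinely needed for either route, since a degenerate utility would break the hypotheses of both Nash's theorem and the LP-duality argument, so your version is the more complete one; the only cost is that it routes through a fixed-point theorem, which the Hart--Schmeidler alternative you mention at the end would avoid.
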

\begin{proof}
Since there is a finite number $C$ of CSCs to choose discrete and finite action space in the proposed CRC game $\mathcal{G}$, it is apparently that $\mathcal{G}$ is a finite game. It has been proved by Theorem 1 in \cite{34} that there exists a CE in every finite game. Therefore, the existence of CE in the proposed CRC game $\mathcal{G}$ can be certified.
\end{proof}
When the subchannels and RSC power are allocated appropriately to UEs, the stable condition, i.e., the Pareto optimum, can be reached by the system. Under the stable point of Pareto optimum, there is no player capable of acquiring higher reward since it potentially causes losses to others.
\begin{theorem} \label{thm2}
With proper strategies of subchannel assignment and transmit power allocation chosen by the CSCs, the Pareto optimum exists in the proposed resource competition game $\mathcal{G}$.
\end{theorem}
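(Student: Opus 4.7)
The plan is to establish existence of a Pareto-optimal action profile by exploiting the finiteness of the game $\mathcal{G}$ and invoking the classical weighted-sum characterization of Pareto optima. First, I would formally restate the notion in the language of the game: a joint strategy $\mathbf{a}^{\star} = (\mathbf{a}_1^{\star},\dots,\mathbf{a}_C^{\star}) \in \mathcal{A}$ is Pareto optimal when there is no alternative profile $\mathbf{a}' \in \mathcal{A}$ such that $U_c(\mathbf{a}') \geq U_c(\mathbf{a}^{\star})$ for every $c \in \mathcal{C}$ with strict inequality for at least one player. This is precisely the maximality condition of the componentwise partial order on the set of attainable utility vectors.

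Next I would verify the structural property that drives everything: by construction in (\ref{Action space of a CSC}), each $\mathcal{A}_c$ is the union of a ``no-allocation'' element $\kappa_{c,s,k}^{0,0}$ and the finite Cartesian set indexed by $n \in \mathcal{N}$ and $\ell \in \mathcal{L}_c$. Hence each $\mathcal{A}_c$ is finite, and consequently $\mathcal{A} = \mathcal{A}_1 \times \cdots \times \mathcal{A}_C$ is finite, so the image $\{(U_1(\mathbf{a}),\dots,U_C(\mathbf{a})) : \mathbf{a} \in \mathcal{A}\} \subset \mathbb{R}^C$ is a finite set of real vectors.

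The core step then uses a weighted-sum argument. I would fix any strictly positive weights $w_1,\dots,w_C > 0$ (the natural cooperative choice being uniform weights, reflecting equal emphasis on each localized C-RAN's EE), and select $\mathbf{a}^{\star} \in \arg\max_{\mathbf{a} \in \mathcal{A}} \sum_{c=1}^{C} w_c U_c(\mathbf{a})$, which is attained because the maximization is over a finite set. A contradiction argument then finishes the proof: if some $\mathbf{a}' \in \mathcal{A}$ Pareto-dominated $\mathbf{a}^{\star}$, then $U_c(\mathbf{a}') \geq U_c(\mathbf{a}^{\star})$ for all $c$ with at least one strict inequality would yield $\sum_c w_c U_c(\mathbf{a}') > \sum_c w_c U_c(\mathbf{a}^{\star})$, contradicting maximality. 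Thus $\mathbf{a}^{\star}$ is a Pareto optimum in $\mathcal{G}$, which is exactly the ``proper strategy'' profile the theorem asserts.

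The main obstacle I anticipate is not the existence argument itself, which is routine once finiteness is secured, but rather pinning down the intended interpretation of the theorem's clause ``with proper strategies of subchannel assignment and transmit power allocation.'' I would read it as the existence of a cooperative joint strategy attaining Pareto optimality, which the weighted-sum construction delivers directly. If a stronger reading is meant---namely, that the regret-based learning dynamics of Section IV converge not merely to some CE (Theorem \ref{thm1}) but specifically to a Pareto-optimal CE---then a genuinely harder step would be required: one would have to align the CE selected by the learning algorithm with the $\arg\max$ of the weighted-sum functional, which typically demands additional structure on the update rule or a tie-breaking convention, and is not implied by finiteness alone.
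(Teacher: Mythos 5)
Your proof is correct, but it takes a genuinely different route from the paper's. You argue directly: the action space $\mathcal{A}=\mathcal{A}_1\times\cdots\times\mathcal{A}_C$ is finite, so a maximizer $\mathbf{a}^{\star}$ of any strictly positively weighted sum $\sum_c w_c U_c(\mathbf{a})$ exists, and any such maximizer cannot be Pareto-dominated --- a self-contained, textbook argument that does not depend on equilibrium theory at all (indeed, a maximal element of the componentwise order on a finite set of utility vectors always exists). The paper instead derives the result from Theorem \ref{thm1}: it asserts that on a CE each player attains its maximum expected utility, and then argues by contraposition that the nonexistence of a Pareto optimum would imply the nonexistence of correlated strategies satisfying (\ref{CE}), contradicting the existence of a CE. That chain is considerably weaker than yours: a correlated equilibrium need not be Pareto optimal in general (the inequality in (\ref{CE}) only rules out profitable unilateral deviations, not joint improvements), so the paper's implication ``no Pareto optimum $\Rightarrow$ no CE'' is not actually justified as stated. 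Your weighted-sum construction sidesteps this entirely and is the more rigorous of the two; what it does not deliver --- and what neither proof delivers --- is the stronger claim you flag at the end, namely that the regret-based dynamics converge to a Pareto-optimal CE rather than merely to some point of the CE set. You are right to separate that question out; it is not implied by finiteness, and the paper does not address it either.
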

\begin{proof}
It has been showed by Theorem 1 that there must exist a CE for the proposed resource competition game $\mathcal{G}$. Moreover, (\ref{CE}) induces that each player can reach its maximum expected utility on the CE. Hence, the sum of total expected utilities of all players can be achieved by choosing the correlated strategies. The existence of Pareto optimum can be proved by contradiction as follows. If there does not exist Pareto optimum in the proposed game $\mathcal{G}$, there must not exist the correlated strategies satisfying (\ref{CE}) and the CE must not exist, which obviously contradicts Theorem \ref{thm1}. Consequently, there must exist the Pareto optimum in proposed game $\mathcal{G}$.
\end{proof}

\subsection{Operational Process for CRC Scheme}
In this subsection, a distributed learning algorithm based on regret matching procedure \cite{35} is adopted for each CSC to determine its correlated strategy in order to obtain the CE in proposed $\mathcal{G}$. Each CSC will select the new strategy of subchannel and power allocation considering the regret value of others not employing strategies in the history. In other words, the higher regret value of non-employed strategies indicates higher probability for the CSC to deviate from its current strategy.

\subsubsection{Initialization}
The allocation of subchannels, transmit power and user association will be initially configured from the TURA scheme within the localized small groups. Each CSC calculates its own utility function by (\ref{U_{c}}) from the subchannels, transmit power of RSC and user association, which are $\mathbf{\Phi_{c}},\mathbf{\Psi_{c}},\mathbf{P_{c}}$ determined in the operational process of TURA scheme.
\subsubsection{Evaluation of Regret Value}
Each CSC determines its potential taken strategies from the regret value. Given a history of adopted strategies, the average regret value of CSC $c$ at time $t$ can be expressed as
\begin{align}
\label{Regret value}
R_{c}^{t}(\mathbf{a}_{c},\hat{\mathbf{a}}_{c})=\text{max}\{{D_{c}^{t}(\mathbf{a}_{c},\hat{\mathbf{a}}_{c}),0}\}.
\end{align}
The $R_{c}^{t}$ given in (\ref{Regret value}) shows that the average regret value depends on the strategy played in the past, $\mathbf{a}_{c}$, and the unused strategy, $\hat{\mathbf{a}}_{c}$, where both $\mathbf{a}_{c}$ and $\hat{\mathbf{a}}_{c} \in \mathcal{A}_{c}$. The term $D_{c}^t$ represents the quantity of difference between the EE for CSC $c$ at time $t$ to adopt the transmission alignments of $\mathbf{a}_{c}$ and $\hat{\mathbf{a}}_{c}$ for subchannel assignment and RSC transmit power allocation. As a result, $D_{c}^{t}$ can be formulated as
\begin{align}
\label{Difference of strategies}
D_{c}^t(\mathbf{a}_{c},\hat{\mathbf{a}}_{c})=\frac{1}{t}\sum\limits_{\substack{\tau \leq t}}[U_{c}(\hat{\mathbf{a}}_{c},\mathbf{A}_{-c}^\tau)-U_{c}^\tau(\mathbf{A}^\tau)].
\end{align}

\subsubsection{Probability Exchanging and Strategy Updating}
The CSCs will choose the strategies with the most profit of utility. After evaluating $R_{c}^{t}$ of potential strategies, the probabilities that the strategies are prone to be selected are updated as follows.
\begin{align}
\label{Probability update}
\mathbf{w}_{c}^t(\hat{\mathbf{a}}_{c}^{t}) = \left\{
\begin{array}{l l}\frac{1}{\xi}R_{c}^{t}(\mathbf{A}_{c}^{\tau},\hat{\mathbf{a}}_{c}^{t}),
&\hat{\mathbf{a}}_{c}^{t}\neq \mathbf{A}_{c}^{\tau},\\
1-\sum\limits_{\substack{\hat{\mathbf{a}}_{c}^{t}\in \mathcal{A}_{c} \\
\hat{\mathbf{a}}_{c}^{t}\neq \mathcal{A}_{c}^{\tau}}}\frac{1}{\xi}R_{c}^{t}(\mathbf{A}_{c}^{\tau},\hat{\mathbf{a}}_{c}^{t}), &  \hat{\mathbf{a}}_{c}^{t} = \mathbf{A}_{c}^{\tau}. \
\end{array}
  \right.
\end{align}
The term $\xi$ in (\ref{Probability update}) is a non-negative and large enough number to normalize the summation of probabilities for different strategies to one. Each CSC will exchange their probabilities of potential strategies with each other. However, the error may occur due to asynchronous information exchanging from other CSCs. Therefore, the exchanged probability vector as shown in Fig. \ref{Fig:OP} can be represented as
\begin{align} \label{exchpro}
\mathbf{\tilde{w}}_{-c}^{t}(\hat{\mathbf{a}}_{-c}^{t})=\mathbf{w}_{-c}^{t}(\hat{\mathbf{a}}_{-c}^{t}) \cdot (1+\rho\Delta h),
\end{align}
where $\mathbf{w}_{-c}^{t}$ is the set for probability of potential strategies from all the CSCs except for CSC $c$. Moreover, the parameter $\rho \in [0,1]$ is the error ratio due to asynchronous exchanged information, and $\Delta h \sim \mathcal{CN}(0,1)$. As a result, each CSC can determine its strategy for transmission alignment as
\begin{align}
\label{action update}
\mathbf{a}_{c}^{t+1}=\arg\underset{\hat{\mathbf{a}}_{c}^{t}\in\mathcal{A}_{c}}{\max}\mathbf{w}_{c}^{t}(\hat{\mathbf{a}}_{c}^t).
\end{align}

\subsubsection{Convergence}
The ultimately determined strategy of each CSC will be achieved according to the following criterion as
\begin{align}
\label{convergence of game}
\frac{|U_{c}^{t+1}-U_{c}^{t}|}{U_{c}^{t}}\leq \theta^{\text{th}},\;\;\quad\forall c.
\end{align}
The subchannel assignment and transmit power of RSCs will be allocated by their serving CSC within the localized C-RAN when the criteria of (\ref{convergence of game}) meets. Otherwise, the CSCs will continue performing the algorithm until the condition of (\ref{convergence of game}) is satisfied. The detailed procedure for adapted algorithm is described in Algorithm \ref{alg:CRC}.
\begin{algorithm}[!tb]
\caption{Proposed CRC Algorithm}
\SetAlgoLined
\DontPrintSemicolon
\label{alg:CRC}
\begin{algorithmic}[1]
\FOR{$s=1,\dots,S$}

\STATE Set the initial iteration as $t=1$
\STATE CSC $c$ initializes its strategy $\mathbf{a}_{s}^1$ of subchannel assignment and transmit power allocation for the users according to the $\mathbf{\Phi_{c}},\mathbf{\Psi_{c}},\mathbf{P_{c}}$ in TURA scheme
\REPEAT
	\STATE Evaluate average regret value of different strategies based on its own utility function of EE by calculating (\ref{Regret value})
	\STATE Update the probability for the strategy according to (\ref{Probability update})
	\STATE Exchange the probability of potential strategies with other CSCs according to $\eqref{exchpro}$
	\STATE Choose the strategy for iteration $t+1$ given the probability distribution of $w_{c}^{t}(\mathbf{a}_{c}^{t})$ by $\mathbf{a}_{c}^{t+1}=\arg\underset{\hat{\mathbf{a}}_{c}^{t}\in\mathcal{A}_{c}}{\max}\mathbf{w}_{c}^{t}(\hat{\mathbf{a}}_{c})$
	\STATE $t=t+1$
\UNTIL{Convergence: $\frac{|U_{c}^{t+1}-U_{c}^{t}|}{U_{c}^{t}}\leq \theta^{\text{th}}$}
	\ENDFOR
\end{algorithmic}
\end{algorithm}

\section{Performance Evaluations} \label{sec:PE}

\begin{table}
\begin{center}
\footnotesize
\caption {Main System Parameters}
    \begin{tabular}{ll}
        \hline
        Parameters of system& Value \\ \hline \hline
        
        Number of subchannels $N$ & 50 \\
        Subchannel bandwidth $W$ & 360 kHz \\
        Carrier frequency & 2.6 GHz \\
        Noise power $N_0$ & $-174$ dBm/Hz \\
        Circuit power $P^{\rm (CA)}$/$P^{\rm (CS)}$ & 6.8/4.3 W \\
        Signal power overhead $P^{\rm (O)}$ & 1 dBm \\
        Maximum transmit power $P^{\rm (max)}$ & 20 dBm \\
        Minimum data rate requirement $R^{\rm (min)}$ & 10 Mbps \\ 
        Coverage area & 90 $\times$ 90 (meter$^{2}$)\\ \hline
        Parameters of HARM scheme & Value \\ \hline \hline
        Maximum number of iterations $T$ & 600\\
        Number of candidate solutions $I$ & 40\\
        Penalty factor $\alpha$ & 1.5 \\
        Searching range $\beta^{\rm (max)}$/$\beta^{\rm (min)}$ & 1.2/0.5\\
        Convergence threshold $F^{\rm (th)}$/$\theta^{\rm {th}}$ & $10^{-4}$/$10^{-4}$\\
        \hline
    \end{tabular} \label{Main_Parameter}
\end{center}
\end{table}

In this section, the performance evaluation of proposed HARM, TURA, and CRC will be provided and illustrated through simulation. Consider a small cell network, the RSCs are deployed in grid in a square coverage and users are uniformly distributed within the coverage area. Each user is located at a minimum distance of 2 meters from each RSC. The network channel model and system parameters are considered based on \cite{36} and \cite{37}, where the case of indoor dense urban information society in \cite{37} is taken as the reference. The channel fading is composed of pathloss and Rayleigh fading. The pathloss model between RSCs and users depends on the sight conditions such as line of sight (LoS) and non-line of sight (NLoS). It can be represented as $PL=18.7\log_{10}(d_{s,k})+46.8+174.32$ in LoS and $PL=36.8\log_{10}(d_{s,k})+43.8+174.32+5(n_{w}-1)$ in NLoS, where $d_{s,k}$ is the distance between RSC $s$ and user $k$ in meters and $n_{w}$ is the number of walls between transmitter and receiver. The Rayleigh fading is modeled as independent and identically distributed Gaussian distribution random variable. The main parameter setting is listed in Table \ref{Main_Parameter}. Moreover, if not mentioned specifically, the data rate of each user, maximum transmit power of RSC and fronthaul capacity of RSC are respectively set to be identical as $R_{c,k}^{\rm (min)}=R^{\rm (min)}$, $P_{c,s}^{\rm (max)}=P^{\rm (max)}$ and $B_{c,s}^{\rm{max}}=B^{\rm{max}}, \forall c,k,s$. All simulation results are averaged over random user locations and channel conditions according to Monte Carlo runs.

\begin{figure}
\centering
\includegraphics[width=3.3in]{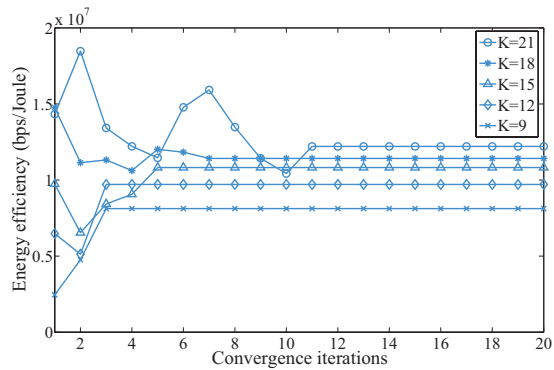}
\caption{Energy efficiency versus number of iterations in proposed CRC scheme under $S=6$ and $B^{\rm (max)}=20$ Mbps.} \label{Fig.S4}
\end{figure}

\subsection{Convergence of CRC Scheme}
\label{Convergence_CRC}
In this subsection, the convergence of proposed CRC scheme is studied in Fig. \ref{Fig.S4} under different number of traffic loads $K$. Note that the CRC scheme mentioned in Section \ref{sec:CRC} is designed to perform resource competition and interference management among multiple C-RANs. We consider that each C-RAN consist of a small cell with synchronous ideal backhaul which integrates a CSC and an RSC for the comprehensive capability to perform network functions. Accordingly, the proposed CRC scheme will be conducted for resource competition at small cells. It can be seen form Fig. \ref{Fig.S4} that the CRC algorithm can quickly converge to its CE and EE under few iterations by adopting the regret-based learning algorithm. More iterations are required to achieve the CE and converged EE in the proposed CRC scheme when the number of users is increased due to the reason that there exist more available strategies for resource allocation to be chosen.

\subsection{Performance of Proposed TURA Scheme}
\label{Analysis_TC-UARA}
In this subsection, the simulation results are provided to demonstrate tthe effectiveness of proposed TURA scheme on traffic control, and the impacts from RSC on/off mechanism and the capacity limitation of fronthaul. Furthermore, the impact of signalling overhead for traffic control on EE performance is also discussed. Since the effect of traffic control is a critical concern in this paper, a benchmark of RSRP-based user association and resource allocation (RURA) method is considered for performance comparison, where it adopts the identical power and subchannel allocation as TURA along with RSRP-based user association.

\begin{figure}
\centering
\includegraphics[width=4in]{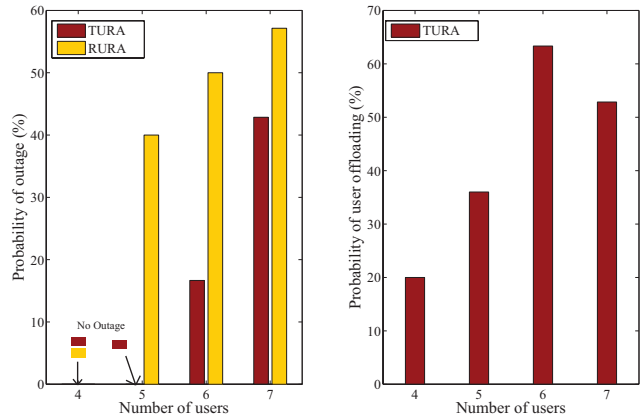}
\caption{Traffic control effect: Probability of outage (left subplot) and user offloading (right subplot) versus different number of users under $S=4$ RSCs and $B^{\rm (max)}=\{10,10,20,10\}$  Mbps.} \label{Fig.S1}
\end{figure}

\begin{figure}
\centering
\includegraphics[width=3.3in]{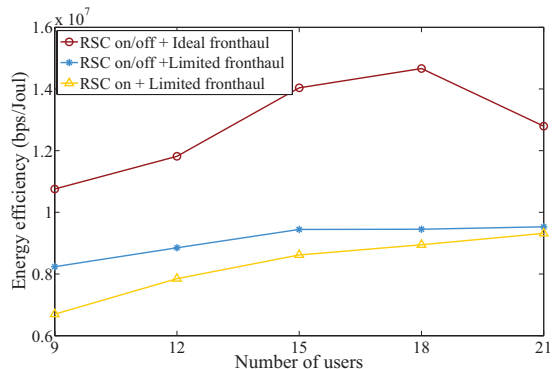}
\caption{Performance comparison for three seanariors in proposed TURA scheme: Energy efficiency versus number of users under $S=6$ and $B^{\rm (max)}=20$ Mbps.} \label{Fig.S2_1}
\end{figure}

\subsubsection{User Offloading and Traffic Control of TURA Scheme}
Fig. \ref{Fig.S1} depicts the relationship between QoS satisfaction and traffic control effect in term of outage probability and traffic control probability respectively. In this figure, a simple case with the number of RSC, $S=4$, is used to observe the effect of traffic control. The capacities of fronthaul are set as $B^{\rm (max)}=\{10,10,20,10\}$ megabit per second (Mbps). A single user is associated with each RSC respectively at the beginning, and then another user is added into the network at a time, i.e, $K=\{4,5,6,7\}$. The left subplot of Fig. \ref{Fig.S1} shows that there is no outage caused in both TURA and RURA schemes with $K=4$ since the capacity of fronthaul is sufficient for the RSC to satisfy the QoS of each user. Furthermore, the outage probability is lower in proposed TURA scheme compared to RURA method with increasing number of users in the network, which illustrates the merits of TURA scheme on user offloading as shown in the right subplot to overcome the restriction of insufficient fronthaul capacity for RSCs. Note that the probability of  traffic control decreases from $N = 6$ to $N = 7$ when the network traffic is overloaded since the total capacity of fronthaul is limited and there is no appropriate configuration of user association to satisfy the data rate requirements of users. Therefore, Fig. \ref{Fig.S1} demonstrates the traffic control capability of improving QoS satisfaction with the limitation of fronthaul capacity. 

\begin{figure}
\centering
\includegraphics[width=3.3in]{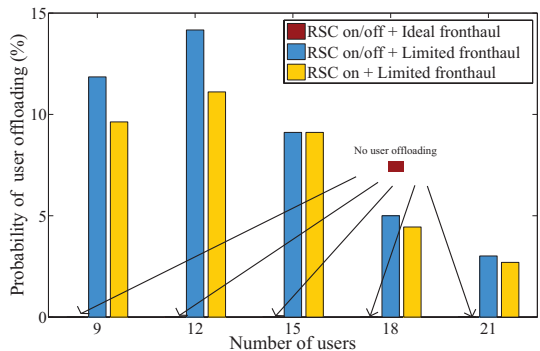}
\caption{Performance comparison for three seanariors in proposed TURA scheme: probability of user offloading versus number of users under $S=6$ and $B^{\rm (max)}=20$ Mbps.} \label{Fig.S2_2}
\end{figure}

\begin{figure}
\centering
\includegraphics[width=3.3in]{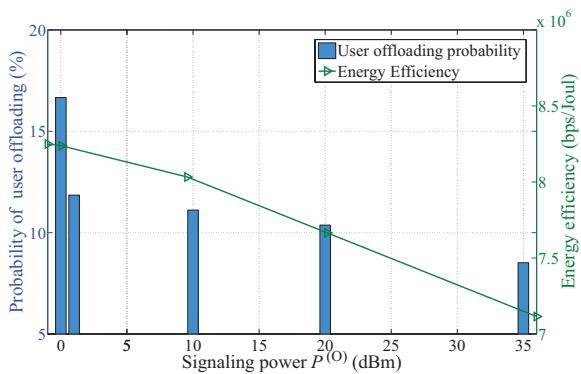}
\caption{Energy efficiency and probability of user offloading versus signaling power $P^{\rm (O)}$ with proposed TURA scheme under $S=6$ and $B^{\rm (max)}=20$ Mbps.} \label{Fig.S3}
\end{figure}

\subsubsection{Impact of RSC on/off Mechanism and Fronthaul Capacity Limit}
In Fig. \ref{Fig.S2_1}, the relationship between EE and different traffic loads $K=\{9,12,15,18,21\}$ under $S=6$ and $B^{\rm (max)}=20$ Mbps is illustrated for three scenarios. The scenario of RSC on/off with unlimited fronthaul capacity provides a better performance on EE over the scenario of RSC on/off  with limited fronthaul since the data rate of user are not restricted by the capacity of fronthaul. Additionally, the performance comparison between the scenarios of RSC on/off with limited fronthaul and RSC on with limited fronthaul depicts the merit of RSC on/off mechanism on the power-saving of RSCs to enhance network EE. Furthermore, it can be observed that the effectiveness of mechanism of RSC on/off mechanism is revealed especially under lower traffic loads, i.e., $K=9$, where the RSCs tend to be turned off so as to conserve energy. Fig. \ref{Fig.S2_2} illustrates the probability of user offloading under different traffic loads. It can be observed that there is no traffic control performed in the scenario of RSC on/off with ideal fronthaul because sufficient capacity is available to support required data rate for the fronthaul. The RSC on/off mechanism provides impact on traffic control since power can be conserved by means of offloading the user to other RSCs and turning off the RSC without serving users. Given that lower amount of required traffic will lead to higher tendency for the RSCs to be turned off by conducting traffic control, the effect of traffic control on energy conservation for RSC power is revealed especially under lower network traffic loads, e.g., $K= \{9,12\}$. It can be concluded that the effect of traffic control is mainly influenced by the limitation of fronthaul capacity. Fig. \ref{Fig.S3} illustrates the probability of user offloading and EE under different signaling power of $P^{\rm (O)}=\{0,1,10,20,35\}$ dBm for proposed TURA scheme with the number of RSC $S=6$ and $B^{\rm (max)} = 20$ Mbps. Note that the signaling power $P^{\rm (O)}$ is considered in the proposed TURA scheme in order to reduce the ping-pong effect in handover process as mentioned in section \ref{sec:TURA}. It can be observed that both the probability of user offloading and EE decrease with the increased signaling power. The reason is that traffic control is performed either for QoS satisfaction when the fronthaul link is overloaded or for energy conservation. Hence, the network EE decreases with the increasing signalling power due to the user offloading for achieving the minimum data rate requirements.

\begin{figure}
\centering
\includegraphics[width=3.3in]{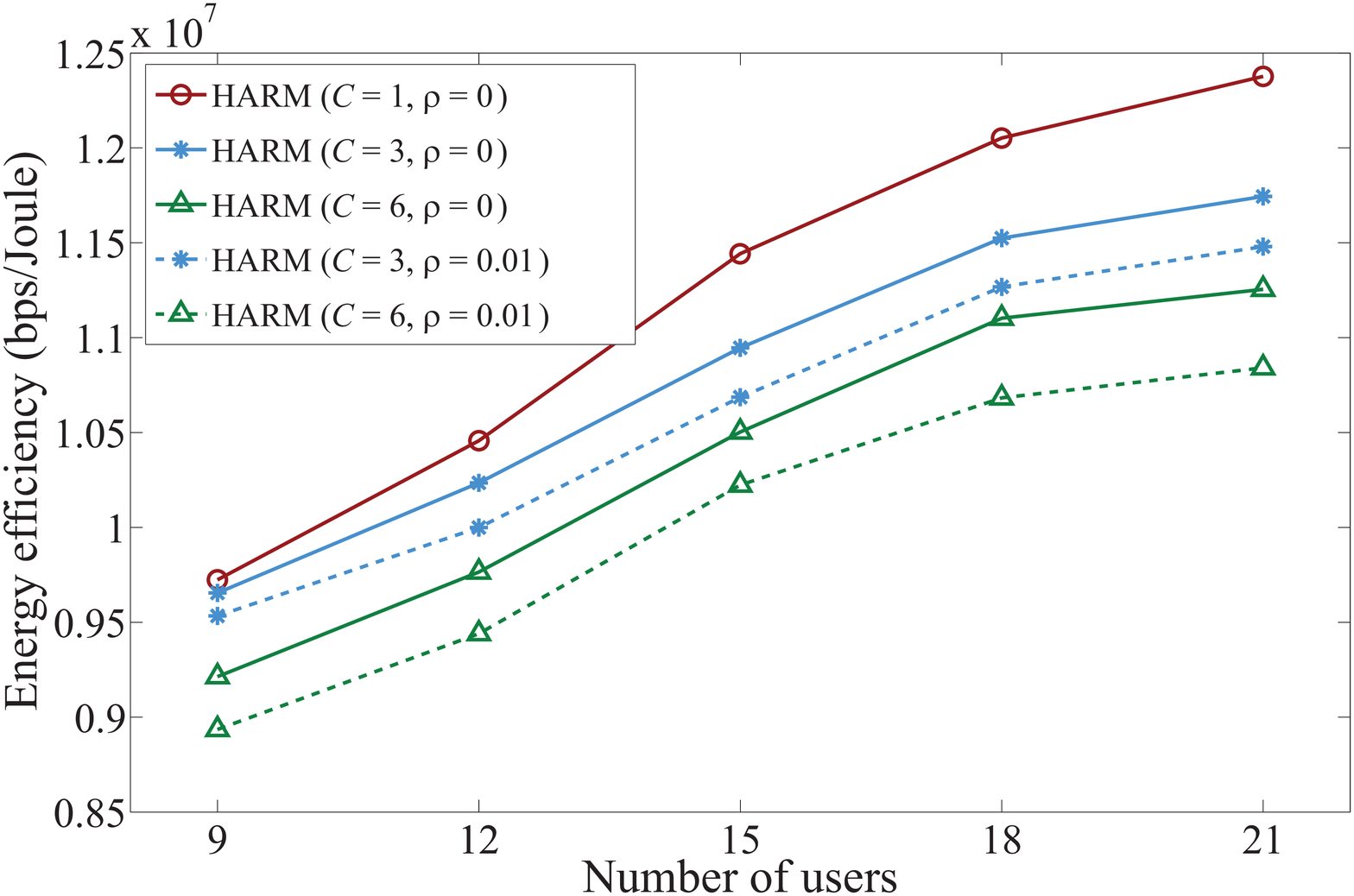}
\caption{Performance of the proposed HARM scheme: Energy efficiency versus number of users under $S=6$, $B^{\rm (max)}=30$ Mbps, $C=\{1,3,6\}$ and $\rho=\{0,0.01\}$.} \label{Fig.S7_1}
\end{figure}

\begin{figure}
\centering
\includegraphics[width=4in]{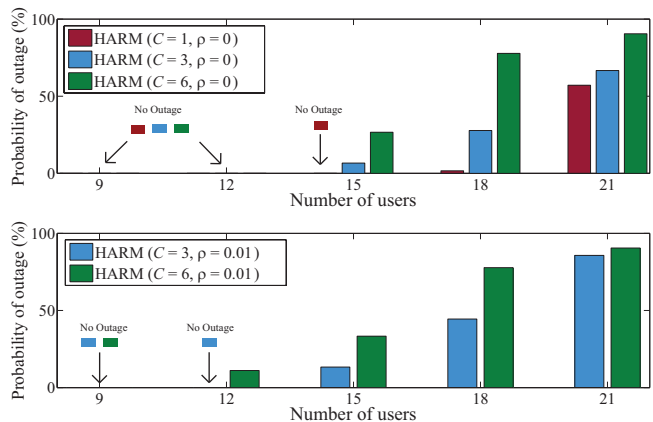}
\caption{Performance comparison of the proposed HARM scheme: Probability of outage versus number of users under $S=6$, $B^{\rm (max)}=30$ Mbps, $C=\{1,3,6\}$ and $\rho=\{0,0.01\}$.} \label{Fig.S7_2}
\end{figure}

\subsection{Performance of Proposed HARM Scheme}
The performance comparison of EE and outage probability of the proposed HARM scheme considering effects of number of localized C-RANs and error ratio of information exchanged versus number of users is shown in Figs. \ref{Fig.S7_1} and \ref{Fig.S7_2}, respectively. Fig. \ref{Fig.S7_1} depicts that the EE of HARM with zero error ratio $\rho=0$ degrades with increasing number of localized C-RANs from $C=1$ to $C=6$. This is because more localized C-RANs divide the network will provoke a comparably more distributed control for resource allocation. Consequently, the coordination among localized C-RANs cannot be performed simultaneously under $C=6$ C-RANs compared to the totally centralized control under $C=1$. With higher error ratio of $\rho=0.01$, the HARM has a lower EE due to asynchronous exchanged information and more power consumption under a large scale network. Furthermore, as illustrated in Fig. \ref{Fig.S7_2}, the outage probability of the proposed HARM scheme with zero error ratio $\rho=0$ asymptotically increases with the increment of localized C-RANs due to non-coordinated interference management and restricted traffic control. Under higher error ratio of asynchronous information of $\rho=0.01$, it potentially brings out higher probability of erroneous resource allocation strategies exchanged between CSCs, which induces inappropriate interference management. In other words, RSCs in the associated localized C-RAN cannot alleviate interference by allocating proper subchannels and transmit power, which also reflects a degradation of EE performance from $\rho=0$ to $\rho=0.01$ as illustrated in Fig. \ref{Fig.S7_1}.

\begin{figure}
\centering
\includegraphics[width=3.3in]{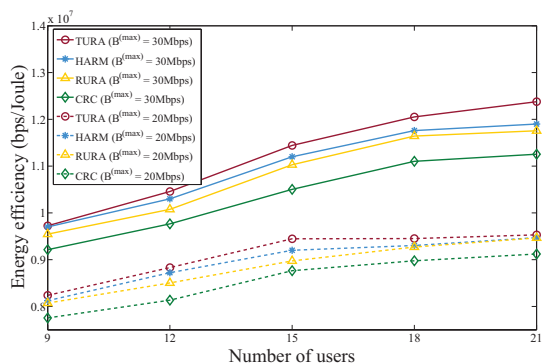}
\caption{Performance comparison between proposed TURA, HARM, RURA, and CRC schemes: Energy efficiency versus number of users under $S=6$ and $B^{\rm (max)}=\{20,30\}$Mbps.} \label{Fig.S5_1}
\end{figure}

\begin{figure}
\centering
\includegraphics[width=4in]{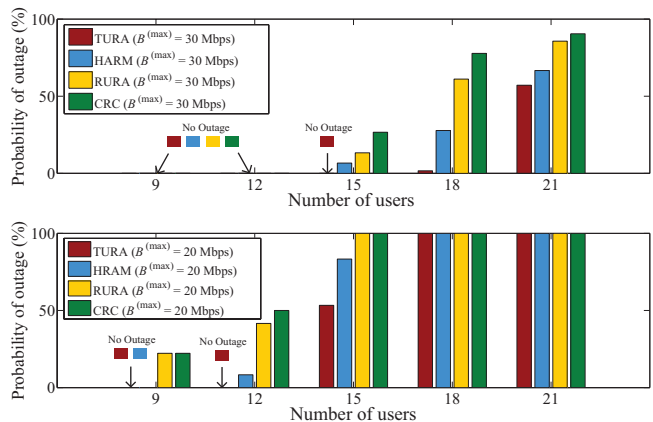}
\caption{Performance comparison between proposed TURA, HARM, RURA, and CRC schemes: Outage probability versus number of users under $S=6$ and $B^{\rm (max)}=\{ 20,30\}$ Mbps.} \label{Fig.S5_2}
\end{figure}

\begin{figure}
\centering
\includegraphics[width=4in]{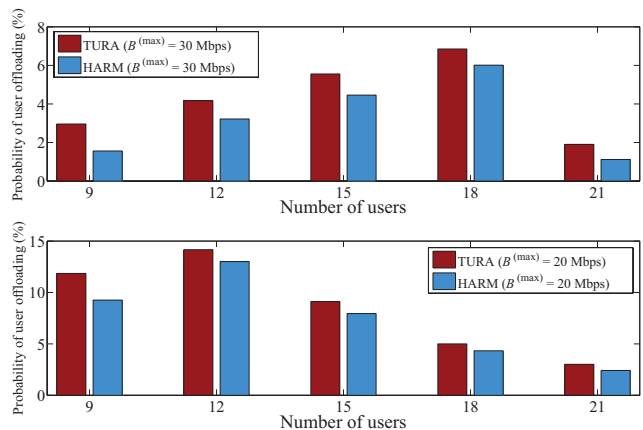}
\caption{Probability of user offloading in proposed TURA and HARM scheme versus number of users under $S=6$ and $B^{\rm (max)}=\{ 20,30\}$ Mbps.} \label{Fig.S5_3}
\end{figure}

\subsection{Performance Comparisons with Existing Methods}
In this subsection, we will evaluate the performance of proposed HARM scheme to observe the integrated effects of both TURA and CRC methods. Note that the centralized control of the proposed TURA scheme considers that all RSCs are managed by a single CSC with upper MAC functions, while the lower MAC functions reside in RSCs with confined fronthaul capacity. Moreover, the performance analyses of HARM are investigated along with the proposed CRC scheme applied in the distributed small cells. Note that the benchmark of RURA adopts RSRP-based user association along with non-adjustable power and subchannel allocation.

\subsubsection{Different Traffic Loads and Fronthaul Capability}
\label{subsubsec:Different traffic}

\begin{figure}
\centering
\includegraphics[width=3.3in]{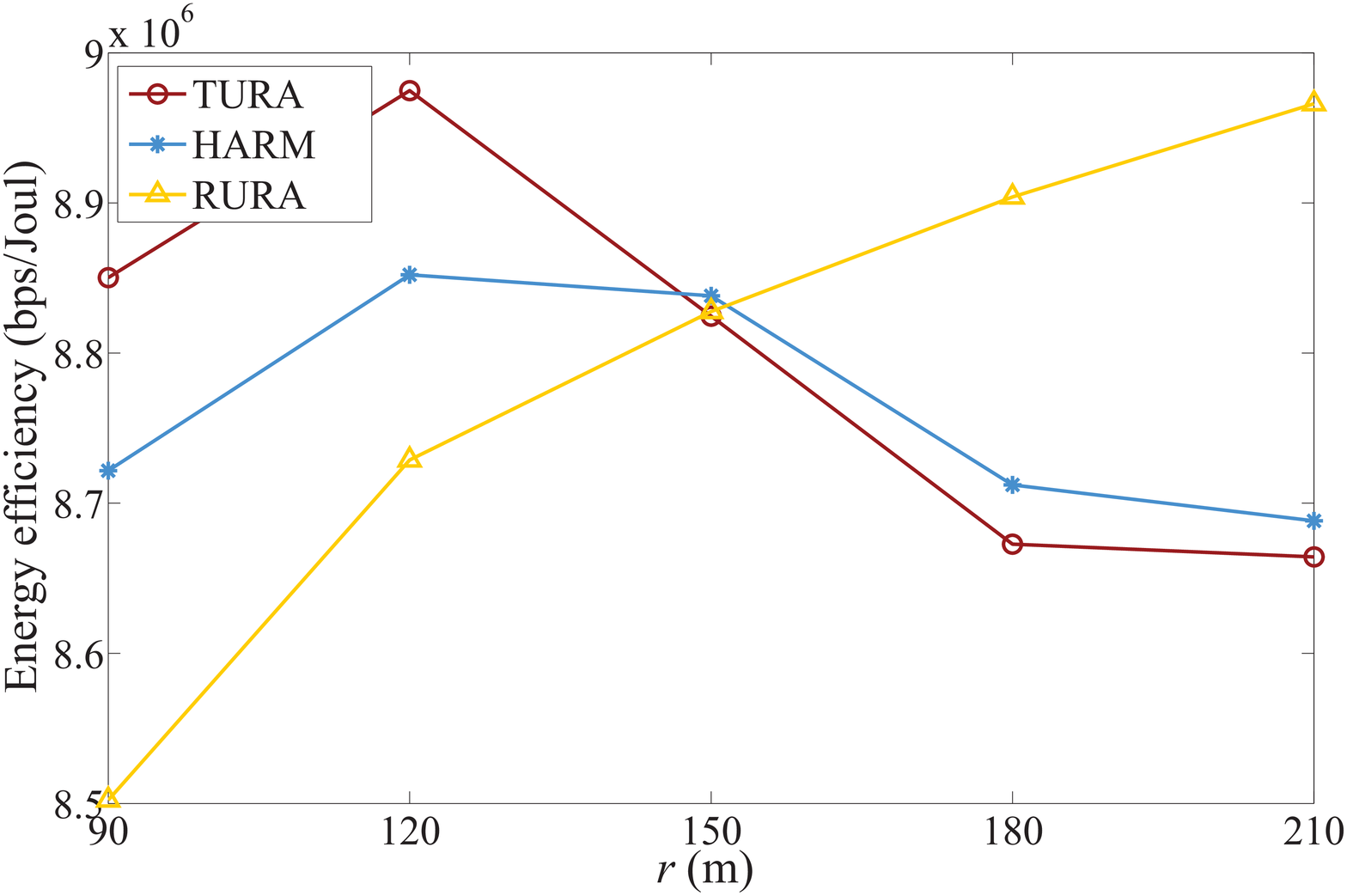}
\caption{Performance comparison between proposed TURA, HARM and RURA schemes: Energy efficiency versus the length of square coverage area $r$ under $K=12$, $S=6$ and $B^{\rm (max)}=20$ Mbps.} \label{Fig.S6_1}
\end{figure}

Fig. \ref{Fig.S5_1} illustrates EE performance of the proposed TURA, HARM, and CRC schemes compared with the benchmark of RURA under different traffic loads $K=\{9,12,15,18,21\}$ and fronthaul capacities $B^{\rm (max)}=\{20,30\}$ Mbps. It can be observed that higher EE can be reached due to sufficient fronthaul capacity of $B^{\rm (max)}=30$ Mbps. The performance of EE in TURA scheme outperform the other schemes in both scenarios due to the centralized control by CSC, which strike a compelling balance of interference management, traffic control under limited fronthaul links. The EE performance of HARM falls a little from TURA since the interference among localized RSCs is not well alleviated as that performed in TURA. Moreover, the users cannot be readily offloaded between different RSCs due to the restricted capacity of fronthaul. Additionally, higher EE is acquired by RURA than that of CRC which is induced by more existing interferences from asynchronous information exchanged and inappropriate resource configuration. The outage probability considering different fronthaul capability is illustrated in Fig. \ref{Fig.S5_2}, whereas Fig. \ref{Fig.S5_3} depicts the probability of user offloading with TURA and HARM schemes under different number of users $K=\{9,12,15,18,21\}$ and fronthaul capacities $B^{\rm (max)}=\{20,30\}$ Mbps. Note that there conducts no traffic control mechanism in the compared schemes of RURA and CRC. It can be observe that the proposed TURA scheme outperforms the other benchmarks due to centralized configuration of resource management and traffic control. On the other hand, HARM has higher outage probability and lower tendency of traffic control because the user cannot be offloaded between two neighbouring RSCs. We can also infer that the outage probability can be substantially reduced with higher fronthaul capacity, i.e., lower outage is achieved under $B^{\rm (max)}=30$ Mbps compared to that under $B^{\rm (max)}=20$ Mbps. However, users suffer from full outage among all schemes due to high traffic loads of $K = \{18, 21\}$ and insufficient fronthaul of $B^{\rm (max)}=20$ Mbps. Moreover, as shown in Fig. \ref{Fig.S5_3}, it depicts that the increased offloading probability is revealed due to asymptotically saturated fronthaul. Furthermore, the probability of user offloading starts to decrease from $K=18$ under $B^{\rm (max)}=30$ and from $K=12$ under $B^{\rm (max)}=20$ since the fronthaul is overloaded which is incapable of supporting the data rate requirements. In addition to QoS satisfaction, user offloading will also be performed to save energy resulting in potential sleep-mode RSCs.

\subsubsection{Impact on Network Density}
\label{subsubsection:ISD}

\begin{figure}
\centering
\includegraphics[width=3.3in]{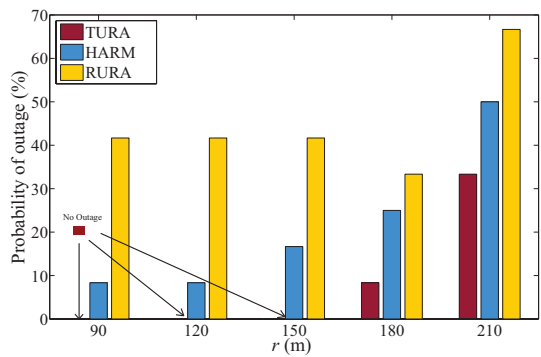}
\caption{Performance comparison between proposed TURA, HARM and RURA schemes: Outage probability versus the length of square coverage area $r$ under $K=12$, $S=6$ and $B^{\rm (max)}=20$ Mbps.} \label{Fig.S6_2}
\end{figure}

\begin{figure}
\centering
\includegraphics[width=3.3in]{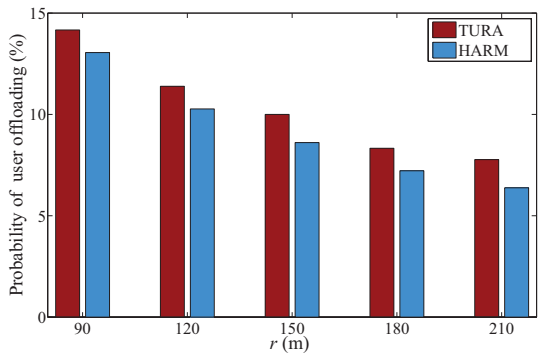}
\caption{Performance comparison between proposed TURA, HARM and RURA schemes: Probability of user offloading versus the length of square coverage area $r$ under $K=12$, $S=6$ and $B^{\rm (max)}=20$ Mbps.} \label{Fig.S6_3}
\end{figure}

We consider that RSCs are deployed within an area of $150 \times 150$ square meters. The performance of the proposed TURA and HARM schemes compared with the benchmark of RURA is demonstrated in Figs. \ref{Fig.S6_1}, \ref{Fig.S6_2} and \ref{Fig.S6_3}. Note that smaller $r$ is regarded as the dense scenario, and vice versa. It can be observed from Fig. \ref{Fig.S6_1} that EE performance of the proposed TURA scheme outperforms both HARM and RURA algorithms under a denser network, i.e., $r\leq 150$, due to significant interference mitigation. On the other hand, inter-RSC interference is intrinsically smaller with longer distance among RSCs, which results in improved EE for RURA method. However, overloaded RSCs will perform traffic control mechanism under the employment of proposed TURA and HARM schemes in order to satisfy the minimum data rate requirements of users. Therefore, as depicted in Fig. \ref{Fig.S6_2}, both TURA and HARM achieve lower outage probability than RURA due to traffic control mechanism, which TURA and HARM respectively reach around $40\%$ and $20\%$ decrement of outage. Nevertheless, TURA has much lower outage than HARM since user-offloading between different C-RANs can be conducted under centralized operation of TURA, which achieves zero outage under the sparser scenario of $90\leq r \leq 150$. This can also be reflected from Fig. \ref{Fig.S6_3} that the user offloading for HARM is less occurred than TURA scheme. Moreover, with shorter distance among RSCs, more frequent offloading mechanism will happen due to more induced interferences.

\section{Conclusions}
In this paper, we conceive hybrid controlled user association and resource management for resolving the problem of subchannel and transmit power allocation for EE maximization considering limited fronthaul capacity under a large scale green C-RANs. Within a localized C-RAN, the CSC centrally performs the proposed TURA scheme for the RSCs to mitigate the intra-group interference and tackle the issue of limited fronthaul capacity for user QoS requirements. Furthermore, the the proposed CRC scheme can analytically achieve the CE and alleviate the inter-group interference among localized RSCs. The Pareto optimum is theoretically proved based on game theory. Moreover, the simulation results have investigated the effect of traffic control by verifying the convergence of CRC scheme. Additionally, the EE performance of the proposed TURA scheme outperforms the other existing methods due to fully-centralized management under the consideration of ideally sufficient fronthaul. Despite the slightly lower EE performance of the proposed HARM scheme than the ideal case of TURA, it is capable of sustaining an appropriate EE than existing schemes under feasible implementation of a practical green RANs.
	

\bibliographystyle{IEEEtran}
\bibliography{IEEEabrv}

\end{document}